\DeclareSymbolFont{bbold}{U}{bbold}{m}{n}
\DeclareSymbolFontAlphabet{\mathbbold}{bbold}
\newcommand{\N}{\mathbb{N}}
\newcommand{\Z}{\mathbb{Z}}
\newcommand{\R}{\mathbb{R}}
\newcommand{\C}{\mathbb{C}}
\newcommand{\K}{\mathbb{K}}
\newcommand{\1}{\mathbbold{1}}
\newcommand{\loc}{\mathrm{loc}}
\newcommand{\lu}{\mathrm{unif}}
\newcommand{\M}{\mathcal{M}_{\loc,\lu}(\R)}
\newcommand{\Dir}{{\rm D}}
\newcommand{\Neu}{{\rm N}}
\newcommand{\SL}{\mathrm{SL}}
\newcommand{\s}{{\rm s}}
\renewcommand{\c}{{\rm c}}
\renewcommand{\sc}{{\rm sc}}
\newcommand{\ac}{{\rm ac}}
\newcommand{\pp}{{\rm pp}}
\newcommand{\disc}{{\rm disc}}
\newcommand{\ess}{{\rm ess}}
\newcommand{\dist}{\mathrm{dist}}
\renewcommand{\P}{\mathbb{P}}
\newcommand{\uD}{\u@DN{\Dir}}
\newcommand{\uN}{\u@DN{\Neu}}
\newcommand{\u@DN}[1]{u_{#1\mkern-1mu}}
\DeclareMathOperator{\spt}{spt}
\renewcommand{\Im}{\operatorname{Im}}
\DeclareMathOperator{\lin}{lin}
\providecommand{\form}{\tau}
\providecommand{\scpr}[2]{\left( #1 \,\middle|\, #2 \right)}
\providecommand{\dupa}[2]{\left\langle #1 , #2 \right\rangle}
\renewcommand{\sp}{\scpr}
\newcommand{\from}{\colon}
\let\phi\varphi
\let\leq\leqslant
\let\geq\geqslant
\def\@row#1,{#1\@ifnextchar;{\@gobble}{&\@row}}
\def\@matrix{%
    \expandafter\@row\my@arg,;%
    \@ifnextchar({\\ \get@in@paren{\@matrix}}{\after@matrix}%
    }
\def\matrixtype#1#2#3{%
    \ifmmode\def\after@matrix{\end{#2}\right#3}%
    \else\def\after@matrix{\end{#2}\right#3$}$\fi
    \left#1\begin{#2}\get@in@paren{\@matrix}%
    }
\def\@column#1,{#1\@ifnextchar;{\@gobble}{\\ \@column}}
\newcommand\vect{}
\def\svect(#1){\left(\begin{smallmatrix}\@column#1,;\end{smallmatrix}\right)}
\def\vect{\get@in@paren{\@vect}}
\def\@vect{\left(\begin{matrix}\expandafter\@column\my@arg,;\end{matrix}\right)}
\def\get@in@paren#1({\def\my@arg{}\def\my@rest{}\def\after@get{#1}\get@arg}
\let\e@a\expandafter
\def\get@arg#1){\e@a\kl@test\my@rest#1(;}
\def\kl@test#1(#2;{\e@a\def\e@a\my@arg\e@a{\my@arg#1}%
                   \ifx:#2:\let\my@exec\after@get
                   \else\let\my@exec\get@arg
                        \e@a\def\e@a\my@arg\e@a{\my@arg(}%
                        \def@rest#2;%
                   \fi\my@exec}
\def\def@rest#1(;{\def\my@rest{#1\kl@zu}}
\def\kl@zu{)}
\newcommand\MyPairedDelimiter{%
  \@ifstar{\My@Paired@Delimiter{{}}}
          {\My@Paired@Delimiter{}}%
}
\newcommand\My@Paired@Delimiter[4]{%
  \newcommand#2{%
    \@ifstar{\start@PD{#1}{\delimitershortfall=-1sp}{#3}{#4}}
            {\start@PD{#1}{}{#3}{#4}}%
  }%
}
\newcommand\start@PD[5]{%
  #1\mathopen{\mathpalette\put@delim@helper{\put@delim{#2}{#3}{.}{#5}}}%
  #5%
  \mathclose{\mathpalette\put@delim@helper{\put@delim{#2}{.}{#4}{#5}}}%
}
\newcommand\put@delim@helper[2]{%
  \hbox{$\m@th\nulldelimiterspace=0pt #2#1$}%
}
\newcommand\put@delim[5]{%
  \setbox\z@\hbox{$\m@th#5{#4}$}%
  \setbox\tw@\null
  \ht\tw@\ht\z@ \dp\tw@\dp\z@
  #1#5%
  \left#2\box\tw@\right#3%
}
\MyPairedDelimiter*{\abs}{\lvert}{\rvert}
\MyPairedDelimiter*{\norm}{\lVert}{\rVert}
\MyPairedDelimiter{\set}{\{}{\}}
\newcommand\rlim{
\mathchoice{\vcenter{\hbox{${\scriptstyle{+}}$}}}
{\vcenter{\hbox{$\scriptstyle{+}$}}}
{\vcenter{\hbox{$\scriptscriptstyle{+}$}}}
{\vcenter{\hbox{$\scriptscriptstyle{+}$}}}}
\theoremstyle{plain} 
\newtheorem{theorem}{Theorem}[section]
\newtheorem{lemma}[theorem]{Lemma}
\newtheorem{proposition}[theorem]{Proposition}
\theoremstyle{definition}
\newtheorem*{definition}{Definition}
\newtheorem{remark}[theorem]{Remark}
\newcommand{\Hmm}[1]{\leavevmode{\marginpar{\tiny%
$\hbox to 0mm{\hspace*{-0.5mm}$\leftarrow$\hss}%
\vcenter{\vrule depth 0.1mm height 0.1mm width \the\marginparwidth}%
\hbox to 0mm{\hss$\rightarrow$\hspace*{-0.5mm}}$\\\relax\raggedright #1}}}
\begin{document}

\medmuskip=4mu plus 2mu minus 3mu
\thickmuskip=5mu plus 3mu minus 1mu
\belowdisplayshortskip=9pt plus 3pt minus 5pt

\title{Zero measure Cantor spectra for continuum one-dimensional quasicrystals}

\author{Daniel Lenz, Christian Seifert and Peter Stollmann}

\date{}

\maketitle

\begin{abstract}
  We study Schr\"odinger operators on $\R$ with measures as potentials.
  Choosing a suitable subset of measures we can work with a dynamical system consisting of measures. We then relate properties of this dynamical system  with spectral properties of the associated operators.
  The constant spectrum in the strictly ergodic case coincides with the union of the zeros of the Lyapunov exponent and the set of non-uniformities of the transfer matrices.
  This result enables us to prove Cantor spectra of zero Lebesgue measure for a large class of operator families, including  many operator families generated by aperiodic subshifts.
  
  \bigskip
  
  \medskip

  MSC2010: 35J10, 47B80, 47A10, 47A35

  Key words: Schr\"odinger operators, Cantor spectrum of measure zero, quasicrystals
\end{abstract}

\section{Introduction}

Mathematical models for quasicrystals have been studied intensively during the last decades both from the point of view of mathematics and of physics, 
see e.g. the survey articles \cite{Suto95,Damanik00,DamanikEmbreeGorodetski12}. 
In fact,
  right from the beginning in the heuristic investigations \cite{KohmotoKadanoffTang83,Ostlundetal83}  and  in the  rigorous studies   \cite{Casdagli86,BellisardIochumScoppolaTestard1989,Suto87} the  remarkable spectral properties of discrete one-dimensional models has been a key focus of research. In the early days, these spectral phenomena seemed rather strange, especially from the point of view of classical Schr\"odinger operators; with the development of random operator theory it turned out that strange spectral properties are generic in certain ways.


   More specifically, the Hamiltonians for quasicrystals tend to have
  Cantor sets of Lebesgue measure zero  as spectra and their spectral type tends to be purely singular continuous. So far, these topics have been thoroughly investigated for various examples in the discrete case, i.e.\ for operators  on $\Z$. However, from the point of view of modeling  there is no reason to restrict attention to discrete models. It is rather quite natural to consider continuum models on $\R$ as well. In fact,  absence of eigenvalues due to so called Gordon arguments, and absence of absolutely continuous spectrum due to Kotani-Remling theory have been investigated for various classes of  models in the continuum over the last ten  years or so \cite{DamanikStolz2000,Seifert2011,SeifertVogt2013,Klassert07,KlassertLenzStollmann2011}.
 Also, generic singular continuous spectrum for certain classes of models on Delone sets (even in arbitrary dimension) has been established in \cite{LenzStollmann06}.  In this sense, it seems fair to say that the basic questions concerning spectral type for continuum models in one dimension are well understood. In this article, it is now our aim to treat the question of Cantor spectrum of Lebesgue measure zero.  We will do so in a rather general setting. More specifically, we will deal with continuum one-dimensional Hamiltonians of the form
$$-\Delta+\mu\;\: \mbox{in}\;\: L_2(\R),$$
 where $\mu$ is a suitable (local) measure on $\R$. In this way, we will allow
 for very general potentials.

The basic strategy to prove Cantor spectrum of Lebesgue measure zero  follows the method developed  in \cite{Lenz2002} in the discrete setting.  However, due to the general nature of the allowed potentials there is quite some work to be done and  many details to be taken care of. In fact, our work can be seen as a twofold generalization from the discrete case: first from discrete Schr\"odinger operators to continuum ones and then from continuum Schr\"odinger operators to those with a measure as a potential.  The first step is to link ergodic features of subsets $\Omega$ of potentials (i.e., measures) with spectral properties of the associated Hamiltonians.
In case $\Omega$ is strictly ergodic with respect to translations we observe constancy of the spectrum.
This spectrum will then be characterized with the help of Lyapunov exponents and uniformity of transfer matrices.
Employing geometric properties of the potentials as in \cite{KlassertLenzStollmann2011} we can exclude absolutely continuous spectrum, which by Kotani theory yields results on the Lyapunov exponents.
In this way we can prove Cantor spectra of zero measure for a large class of operator families.

\medskip

The paper is organized as follows. In Section \ref{sec:Schr} we introduce Schr\"odinger operators on $\R$ with measures as potentials. Section \ref{sec:Spec}
is devoted to general statements on the spectrum of random Schr\"odinger operators. The transfer matrices and the Lyapunov exponent are introduced in Section \ref{sec:Transfer}.
In Section \ref{sec:CharSpec} we characterize the spectrum of the family of operators by means of the zeros of the Lyapunov exponent and the set of energies where the transfer matrices are not uniform.
We focus on the absolutely continuous spectrum in Section \ref{sec:CharacSpec} and explain the Ishii-Pastur-Kotani Theorem for our setting.
Conditions on absence of absolutely continuous spectrum are considered in Section \ref{sec:AbsacSpec}.
Finally, in Section \ref{sec:Cantor} we prove Cantor spectra of zero measure for suitable operator families.
We also provide a scheme to produce examples generated by aperiodic subshifts over finite alphabets.
In the appendix we state and prove a semi-uniform ergodic theorem for continuous-time subadditive processes.

\smallskip

Parts of this paper are based on the 2012  PhD thesis of one of the authors \cite{Seifert2012}. 
After conceiving this paper we learned about the recent work  \cite{DamanikFillmanGorodetski2012} containing results similar to ours in the situation of 'nice' potentials.

\section{Schr\"odinger operators with measures}
\label{sec:Schr}

In this section we introduce Schr\"odinger operators with (signed) measures as potentials.

We say that a map $\mu$ from the bounded Borel sets in $\R$ to $\R$
is a \emph{local measure} if $\1_K\mu:=\mu(\cdot\cap K)$ is a signed Radon measure for all compact sets $K\subseteq \R$.
Then it is easy to see that there exist a unique nonnegative Radon measure $\nu$ on $\R$ and a measurable function $\sigma\from\R\to\R$ such that
$\abs{\sigma} = 1$ $\nu$-a.e.~and $\1_K\mu = \1_K\sigma\nu$ for all compact sets $K\subseteq\R$. We call $\nu$ the total variation of $\mu$ and write $\abs{\mu}:= \nu$.
A local measure $\mu$ is called \emph{uniformly locally bounded} or \emph{translation bounded} if
\[\norm{\mu}_\lu := \sup_{t\in\R} \abs{\mu}((t,t+1]) < \infty.\]
Let $\M$ denote that space of all uniformly locally bounded local measures. For $\mu\in\M$ and an interval $[a,b]\subseteq\R$ we observe
$\abs{\mu}([a,b])\leq \norm{\mu}_\lu (b-a + 1)$.

Let
\begin{align*}
  D(\form_0) & := W_2^1(\R),\quad
  \form_0(u,v) := \int u' \overline{v}'
\end{align*}
be the classical Dirichlet form associated with the Laplacian $-\Delta$ in $L_2(\R)$. Let $\mu\in\M$. 
Then $\mu$ defines an infinitesimally form small perturbation with respect to $\form_0$; cf.~\cite[Lemma 1.1.1]{Seifert2012}. 
Indeed, by Sobolev's lemma, for any $\gamma>0$ there exists $C_\gamma>0$ such that for any interval $I\subseteq \R$ of length $1$ we obtain
\[\norm{u|_I}_\infty^2 \leq \gamma \norm{u'|_I}_2^2 + C_\gamma \norm{u|_I}_2^2 \quad(u\in W_2^1(\R)).\]
Hence, for $u\in W_2^1(\R)$, we obtain
\begin{align*}
  \abs{\mu(u,u)} & = \int_\R \abs{u(t)}^2\,d\mu(t) = \sum_{k\in\Z} \int_{(k,k+1]} \abs{u(t)}^2\,d\abs{\mu}(t) \\
  & \leq \sum_{k\in\Z} \norm{u|_{[k,k+1]}}_\infty^2 \norm{\mu}_\lu \leq \norm{\mu}_\lu \gamma \form_0(u,u) + \norm{\mu}_\lu C_\gamma \norm{u}_2^2.
\end{align*}
Thus, the form $\form_\mu:= \form_0+\mu$ defined by
\begin{align*}
  D(\form_\mu) & := W_2^1(\R),\quad
  \form_\mu(u,v) := \int u' \overline{v}' + \int u \overline{v}\, d\mu
\end{align*}
is densely defined, semibounded from below, symmetric and closed. Let $H_\mu$ be the unique self-adjoint operator in $L_2(\R)$ associated with $\form_\mu$, i.e.
\[\sp{H_\mu u}{v} = \form_\mu(u,v) \quad(u\in D(H_\mu), v\in D(\form_\mu)),\]
and $D(H_\mu)$ is dense in $D(\form_\mu)$ equipped with the form norm.

\begin{remark} For  $\mu\in\M$ we have defined the operator $H_\mu$ via forms. In fact, various definitions can be found in the literature. This is shortly discussed in the present remark: 
  \begin{enumerate}
    \item
      Let $H_{\mu}^\dist$ be the maximal operator associated with $-\Delta+\mu$ (in the distributional sense), viz
      \begin{align*}
	D(H_\mu^\dist) & := \set{u\in L_2(\R)\cap C(\R);\; -u'' + u\mu\in L_2(\R)},\\
	H_\mu^\dist u & := -u'' + u\mu.
      \end{align*}
      This means  for $u,f\in L_2(\R)$ one has $u\in D(H_\mu^\dist)$, $H_\mu^\dist u = f$ if and only if
      \[
	\int_\R f\phi = -\!\int_\R u\phi'' + \int_\R u\phi\,d\mu \qquad \bigl(\phi\in C_\c^\infty(\R)\bigr).
      \]
    \item
      We can also define a realization of the operator along the lines of Sturm-Liouville theory, see \cite{BenAmorRemling2005}.
      For $u\in W_{1,\loc}^1(\R)$ (choosing always the continuous representative) one defines $A_\mu u\in L_{1,\loc}(\R)$ by
      \[(A_\mu u)(t) := u'(t) - \int_0^t u(s)\, d\mu(s)\]
      for a.a.~$t\in\R$.
      Here,
      \[\int_0^t\ldots \, d\mu = \begin{cases}
			    \int_{(0,t]} \ldots\, d\mu & ,\ t\geq 0,\\
			    -\int_{(t,0]} \ldots\, d\mu & ,\ t<0.
                          \end{cases}\]
      Then we define $H_{\mu}^\SL$ as
      \begin{align*}
	D(H_{\mu}^\SL) & := \set{u\in L_2(\R)\cap W_{1,\loc}^1(\R);\;
				A_\mu u \in W_{1,\loc}^1(\R),\ (A_\mu u)'\in L_2(\R)}, \\
	H_{\mu}^\SL u  & := -(A_\mu u)'.
      \end{align*}
  \end{enumerate}    
      It turns out that all these operators agree. In fact, 
      as shown in \cite[Theorem 3.6]{SeifertVogt2013} we have
      \[H_\mu = H_\mu^\dist = H_\mu^\SL.\]
  \end{remark}

\section{Spectra of random Schr\"odinger Operators}
\label{sec:Spec}

We first show that the vague topology $\sigma(C_c(\R)',C_c(\R))$ on bounded subsets of $\M$ is compact and metrizable. This is, of course, well known. In the context of diffraction of quasicrystals a thorough study can be found in \cite{BaakeLenz04,BaakeLenz2005}.

\begin{proposition}
  Let $\Omega\subseteq \M$ be $\norm{\cdot}_\lu$-bounded and closed with respect to the vague topology.
  Then $\Omega$ is  vaguely compact. 
  Furthermore,
  the vague topology on $\Omega$ is induced by some metric, i.e., $\Omega$ is metrizable.
\end{proposition}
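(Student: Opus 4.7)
The strategy is to combine Banach--Alaoglu on each compact piece with a diagonal argument for compactness, and to construct an explicit metric using a countable dense family of test functions for metrizability. Set $C := \sup_{\mu\in\Omega}\norm{\mu}_\lu < \infty$. The inequality noted right after the definition of $\M$ gives the uniform bound $\abs{\mu}([-n,n]) \leq C(2n+1)$ for every $\mu\in\Omega$ and $n\in\N$, so the restriction map $\mu\mapsto \mu|_{[-n,n]}$ sends $\Omega$ into the ball of radius $C(2n+1)$ in $C([-n,n])^*$.

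For compactness, I would argue sequentially. Given $(\mu_k)\subset\Omega$, apply the Banach--Alaoglu theorem in $C([-1,1])^*$ (whose predual is separable, so bounded sets are weak-$*$ sequentially compact) to extract a subsequence whose restrictions to $[-1,1]$ converge weakly-$*$. Iterate on $[-n,n]$ and take a diagonal subsequence $(\mu_{k_j})$ so that $\int f\,d\mu_{k_j}$ converges for every $f\in C_c(\R)$. The limits define a continuous linear functional on $C_c(\R)$, and by the Riesz representation theorem correspond to a unique local measure $\mu\in\Mloc$. To see that $\mu\in\M$, use the variational characterization $\abs{\mu}(U)=\sup\set*{\abs{\mu(f)}\colon f\in C_c(U),\ \norm{f}_\infty\le 1}$ for open $U$: for each such $f$ we have $\abs{\mu(f)}=\lim_j \abs{\mu_{k_j}(f)}\le \liminf_j \abs{\mu_{k_j}}(U)$, hence $\abs{\mu}(U)\le \liminf_j \abs{\mu_{k_j}}(U)$. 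Applied with $U = (t-\eps,t+1+\eps)$ and using $\abs{\mu_{k_j}}(U)\le 2C$ for $\eps<\tfrac12$, this yields $\abs{\mu}((t,t+1])\le 2C$, so $\norm{\mu}_\lu<\infty$. Since $\Omega$ is vaguely closed and $\mu_{k_j}\to\mu$ vaguely, $\mu\in\Omega$.

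For metrizability, choose a countable family $\set{f_j}_{j\in\N}\subset C_c(\R)$ whose members of support in $[-n,n]$ are sup-norm dense in $\set{f\in C(\R)\colon \spt f\subseteq [-n,n]}$ for every $n$ (possible since each $C([-n,n])$ is separable). Define
\[
  d(\mu,\nu) := \sum_{j=1}^\infty 2^{-j}\,\frac{\abs*{\int f_j\,d(\mu-\nu)}}{1+\abs*{\int f_j\,d(\mu-\nu)}}.
\]
Since $\set{f_j}$ is total in $C_c(\R)$ and $C_c(\R)$ separates points of $\Mloc$, $d$ is a metric. That the $d$-topology on $\Omega$ agrees with the vague topology is a standard three-$\eps$ argument: any $f\in C_c(\R)$ with $\spt f\subseteq [-n,n]$ can be approximated to within $\eps$ in sup norm by some $f_j$ with the same support constraint, and then $\abs*{\int f\,d(\mu-\nu)}\le \abs*{\int f_j\,d(\mu-\nu)} + 2\eps C(2n+1)$ uniformly in $\mu,\nu\in\Omega$, using the bound from the first paragraph.

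The only genuinely delicate step is verifying that the limit from the diagonal argument really lies in $\M$, since one only has lower semicontinuity of $\abs{\mu}$ on open sets and has to pass from the half-open strips $(t,t+1]$ through slightly larger open intervals; the cost is a harmless factor of $2$ in the bound on $\norm{\mu}_\lu$, which suffices to conclude $\mu\in\M$.
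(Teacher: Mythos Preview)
Your argument is correct but proceeds quite differently from the paper's. For compactness, the paper embeds $\Omega$ into the polar $U_A^\circ$ of an explicit neighborhood $U_A=\{f\in C_c(\R):|f(t)|\le Ae^{-|t|}\}$ of $0$ in $C_c(\R)$ and invokes Alaoglu--Bourbaki to get weak-$*$ compactness of $U_A^\circ$ directly; your diagonal extraction via Banach--Alaoglu on each $C([-n,n])^*$ is more elementary but yields only \emph{sequential} compactness, so you must combine it with your metrizability step to conclude actual compactness---make that dependence explicit. For metrizability, the paper argues indirectly: the initial topology from a countable dense subset of $C_c(\R)$ is metrizable and Hausdorff, and the identity from $(\Omega,\text{vague})$ onto it is a continuous bijection from a compact space to a Hausdorff one, hence a homeomorphism. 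This is short but relies on having compactness already in hand, so in the paper the two halves are interlocked. Your explicit metric together with the three-$\eps$ bound (which uses only the uniform $\norm{\cdot}_\lu$ estimate, not compactness) is logically independent, which is precisely what lets your overall scheme close. The paper's route buys brevity via locally convex space machinery; yours buys a self-contained argument and, as a byproduct, shows that any $\norm{\cdot}_\lu$-bounded subset of $\M$ is metrizable in the vague topology whether or not it is closed. Your care in checking that the diagonal limit lies in $\M$ via lower semicontinuity of the total variation on open sets is prudent, since ``vaguely closed'' is most naturally read relative to $\M$.
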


\begin{proof}
	We first note that $C_c(\R)$ is the inductive limit of
$(C_0(-n,n))_{n\in\N}$. Thus, $C_c(\R)$ can be equipped with the inductive
topology. Since $C_0(-n,n)$ is separable for all $n\in\N$, also $C_c(\R)$ is
separable.

For $A>0$ let $U_A:=\set{f\in C_c(\R);\; \abs{f(t)}\leq
Ae^{-\abs{t}}\quad(t\in\R)}$. Then $U_A$ is a neighborhood of $0\in C_c(\R)$.
There exists $C\geq 0$ such that $\norm{\mu}_\lu \leq C$ for all
$\mu\in\Omega$. An easy computation yields
\[\abs{\dupa{f}{\mu}} \leq CA \frac{2e}{e-1} \quad(\mu\in\Omega, f\in U_A).\]

So, for $A:= \frac{e-1}{2eC}$ we have $\Omega\subseteq U_A^\circ$, where
\[U^\circ := \set{\mu\in C_c(\R)';\; \abs{\dupa{f}{\mu}}\leq 1\quad(f\in U)}\]
is the absolute polar set of $U$. The Theorem of Alaoglu-Bourbaki (\cite[Satz 23.5]{MeiseVogt1992}) assures that
$U_A^\circ$ is $\sigma(C_c(\R)',C_c(\R))$-compact. Since $\Omega$ is a closed
subset of $U_A^\circ$, $\Omega$ is compact, too.

In order to show that $\Omega$ is metrizable note that the initial topology
$\mathcal{T}$ on $C_c(\R)'$ corresponding to a countable dense set of $C_c(\R)$
 is semimetrizable (i.e.\ it comes from some semimetric) and also separated, and hence induced by some metric. Since
the identity
\[I\from (\Omega,\sigma(C_c(\R)',C_c(\R)))\to (\Omega,\mathcal{T})\]
is bijective, continuous and maps a compact onto a separated space it is in
fact a homeomorphism.
\end{proof}

From now on we assume that $\Omega\subseteq \M$ is $\norm{\cdot}_\lu$-bounded and
 closed with respect to the vague topology. In this setting we always equip
$\Omega$ with the vague topology
such that $\Omega$ becomes a compact metric space.
Furthermore, assume $\Omega$ to be \emph{translation invariant},
i.e., for $\omega\in \Omega$ let also $\omega(\cdot+t)\in \Omega$ for all $t\in\R$.
Then, the additive group $\R$ induces a group action of translations on $\Omega$
 via $\alpha\from
\R\times\Omega\to \Omega$, $\alpha_t(\omega):=\alpha(t,\omega):=
\omega(\cdot+t)$.

\begin{lemma}
$\alpha$ is continuous.
\end{lemma}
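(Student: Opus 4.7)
The plan is to check sequential continuity, which suffices because by the previous proposition $\Omega$ (and of course $\R\times\Omega$) is metrizable. So suppose $t_n\to t$ in $\R$ and $\omega_n\to\omega$ vaguely in $\Omega$; the task is to prove $\alpha_{t_n}(\omega_n)\to\alpha_t(\omega)$ vaguely. Testing against an arbitrary $f\in C_\c(\R)$, the vague pairing for $\alpha_s(\eta)=\eta(\cdot+s)$ is
\[
\dupa{f}{\alpha_s(\eta)} = \int f(r-s)\,d\eta(r),
\]
so what must be shown is
\[
\int f(r-t_n)\,d\omega_n(r) \;\longrightarrow\; \int f(r-t)\,d\omega(r).
\]

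The natural move is to split the difference by inserting $\int f(r-t)\,d\omega_n(r)$, obtaining a \emph{pure-translation} term $\int\bigl(f(r-t_n)-f(r-t)\bigr)\,d\omega_n(r)$ and a \emph{pure-vague-convergence} term $\int f(r-t)\,d\omega_n(r) - \int f(r-t)\,d\omega(r)$. The second term handles itself: $f(\cdot-t)$ is a fixed element of $C_\c(\R)$ and $\omega_n\to\omega$ vaguely, so it tends to zero.

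For the first term I would use that $f$, being continuous with compact support, is uniformly continuous, so $\norm{f(\cdot-t_n)-f(\cdot-t)}_\infty\to 0$. Moreover, as $(t_n)$ is bounded, the supports of all the translates $f(\cdot-t_n)$ lie in a single compact set $K\subseteq\R$. Therefore
\[
\left\lvert\int \bigl(f(r-t_n)-f(r-t)\bigr)\,d\omega_n(r)\right\rvert
\le \norm{f(\cdot-t_n)-f(\cdot-t)}_\infty \cdot \abs{\omega_n}(K).
\]
The crucial point is that $\abs{\omega_n}(K)$ is bounded uniformly in $n$: by the $\norm{\cdot}_\lu$-boundedness of $\Omega$ and the estimate $\abs{\mu}([a,b])\le \norm{\mu}_\lu(b-a+1)$ recalled in Section~\ref{sec:Schr}, one has $\sup_n\abs{\omega_n}(K)<\infty$. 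Hence the first term also tends to zero.

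The main obstacle is precisely the simultaneous nature of the two limits: the measure and the translation parameter move together, and the decomposition above reduces this to two one-variable problems. The step that genuinely uses the structure of $\Omega$ is the uniform control $\sup_n\abs{\omega_n}(K)<\infty$; without the $\norm{\cdot}_\lu$-boundedness standing assumption on $\Omega$ this estimate would fail, and vague convergence alone is insufficient to recover it, since translating $f$ can move mass in or out of $K$ uncontrollably. Everything else is routine uniform continuity and the definition of vague convergence.
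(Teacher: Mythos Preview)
Your proof is correct and follows essentially the same route as the paper: both check sequential continuity, insert the intermediate term $\int f(\cdot-t)\,d\omega_n$ to split into a pure-translation piece (handled by uniform continuity of $f$ and the uniform $\norm{\cdot}_\lu$-bound on $\Omega$) and a pure-vague-convergence piece (handled because $f(\cdot-t)$ is fixed in $C_\c(\R)$). The paper's write-up is an $\varepsilon$--$N$ version of the same estimate, with the common compact support taken as $[-R+t-1,R+t+1]$ once $\abs{t_n-t}\le 1$.
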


\begin{proof}
Let $(t_n)$ in $\R$, $t_n\to t$ and $(\omega_n)$ in $\Omega$, $\omega_n\to \omega$.
Then
\[\alpha_t(\omega_n) = \omega_n(\cdot+t)\to \omega(\cdot+t) = \alpha_t(\omega).\]
Let $f\in C_c(\R)$. There exists $R>0$ such that $\spt f\subseteq [-R,R]$. Let $\varepsilon>0$. There exists $N\in \N$ such that
\[\abs{\int f\,d(\alpha_t(\omega_n)-\alpha_t(\omega))}\leq \varepsilon \quad(n\geq N).\]
Furthermore, by uniform continuity of $f$ and convergence of $(t_n)$, there exists $N'\geq N$ such that $\abs{t_n-t}\leq 1$ and
\[\abs{f(\cdot-t_n) - f(\cdot-t)}\leq \varepsilon\]
for $n\geq N'$.
Hence, for $n\geq N'$, we obtain
\begin{align*}
& \abs{\int f\,d(\alpha_{t_n}(\omega_n) - \alpha_t(\omega))} \\
& \leq \int_{[-R+t-1,R+t+1]} \abs{f(\cdot-t_n)-f(\cdot-t)}\, d\abs{\omega_n} + \abs{\int f(\cdot-t)\, d(\omega_n-\omega)} \\
& \leq \varepsilon \abs{\omega_n}([-R+t-1,R+t+1]) + \varepsilon = \left(\norm{\omega_n}_\lu(2R+3) + 1\right)\varepsilon.
\end{align*}
As $\Omega$ is $\norm{\cdot}_\lu$-bounded we conclude $\alpha_{t_n}(\omega_n)\to \alpha_t(\omega)$.
\end{proof}

\begin{remark} The previous results give that $(\Omega, \alpha)$ is a topological dynamical system. In the context of diffraction on locally compact abelian groups such systems
were introduced and studied under the name of translation bounded measures dynamical systems (TMDS) in \cite{BaakeLenz04,BaakeLenz2005}.
\end{remark}

For $\omega\in\Omega$ the operator $H_\omega$ can be defined
as above by means of the form
\begin{align*}
 D(\form_\omega) & := W_2^1(\R),\quad  \form_\omega(u,v) := \form_0(u,v) + \int u\overline{v}\, d\omega.
\end{align*}
It is easy to see that the lower bound of $H_\omega$ depends on $\norm{\omega}_\lu$; see also \cite[Lemma 1.1]{KlassertLenzStollmann2011}.
Since $\Omega$ is $\norm{\cdot}_\lu$-bounded there
exists $\gamma\in\R$ such that $H_\omega\geq -\gamma$ for all $\omega\in\Omega$.

We will now establish continuity of the mapping $\omega\mapsto H_\omega$ in strong resolvent sense. In the special case of random operators on Delone sets, this can be found in e.g. \cite{LenzStollmann06} (see \cite{Klassert07} as well). The general case seems not to be available in the literature,  we therefore include full proofs. We need some preparation.

\begin{lemma}
\label{lem:fourier}
Let $\nu$ be a finite signed Radon measure on $\R$. Then $\nu\in W_2^1(\R)'$ and
\[\norm{\nu}_{W_2^1(\R)'} \leq \norm{\widehat{J(-(\cdot))}\cdot \hat{\nu}}_{L_2(\R)},\]
where $\hat{J}(p) = \frac{1}{\sqrt{1+p^2}}$ ($p\in\R$) and the hat indicates the Fourier transform.
\end{lemma}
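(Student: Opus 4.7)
The plan is to reduce the inequality to an application of Cauchy--Schwarz on the Fourier side, using that the weight $(1+p^2)^{-1/2}$ is exactly $\hat J(p)$, and that $(1+p^2)^{1/2}$ is the Fourier symbol of the norm on $W_2^1(\R)$. If the right-hand side is $+\infty$ there is nothing to show, so we may assume $\widehat{J(-\cdot)}\cdot\hat\nu\in L_2(\R)$.

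First, by the one-dimensional Sobolev embedding $W_2^1(\R)\hookrightarrow C_0(\R)\cap L_\infty(\R)$, every $u\in W_2^1(\R)$ has a bounded continuous representative, so that $\int u\,d\nu$ is well defined (and majorized by $\norm{u}_\infty\abs{\nu}(\R)$). Hence $\nu$ does define a linear functional on $W_2^1(\R)$. It remains to control its norm.

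Next, I would fix the symmetric convention $\hat f(p)=(2\pi)^{-1/2}\!\int f(t)e^{-ipt}\,dt$, so that $\hat\nu$ is a bounded continuous function and Plancherel gives
\[
 \norm{u}_{W_2^1(\R)}^2 = \int_\R (1+p^2)\abs{\hat u(p)}^2\,dp.
\]
For $u\in\mathcal{S}(\R)$, Fourier inversion and Fubini (which applies since $\hat u\in L_1(\R)$ and $\nu$ is finite) yield the duality identity
\[
 \int_\R u\,d\nu \;=\; \int_\R \hat u(p)\,\hat\nu(-p)\,dp.
\]
Writing the integrand as $\bigl[(1+p^2)^{1/2}\hat u(p)\bigr]\cdot\bigl[(1+p^2)^{-1/2}\hat\nu(-p)\bigr]$ and applying the Cauchy--Schwarz inequality gives
\[
 \left|\int_\R u\,d\nu\right|
 \;\le\; \norm{u}_{W_2^1(\R)}\cdot\left(\int_\R \frac{\abs{\hat\nu(-p)}^2}{1+p^2}\,dp\right)^{\!1/2}.
\]
Since $\hat J$ is even (being real and even as a function of $p$), the change of variable $p\mapsto -p$ identifies the second factor with $\norm{\widehat{J(-\cdot)}\cdot\hat\nu}_{L_2(\R)}$, which is the desired bound on $\mathcal{S}(\R)$.

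Finally, I would pass from Schwartz functions to $W_2^1(\R)$ by density: for $u\in W_2^1(\R)$, choose $u_n\in\mathcal{S}(\R)$ with $u_n\to u$ in $W_2^1(\R)$. The Sobolev embedding forces $u_n\to u$ uniformly, so $\int u_n\,d\nu\to \int u\,d\nu$, while the estimate just proved for the $u_n$ passes to the limit. The only minor obstacle is the book-keeping around the Fourier convention (and the verification that Fubini applies in the Parseval-type identity); once that is set, the inequality is simply Cauchy--Schwarz weighted by the Bessel symbol $\hat J(p)=(1+p^2)^{-1/2}$.
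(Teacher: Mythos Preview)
Your proof is correct and follows essentially the same strategy as the paper: a weighted Cauchy--Schwarz estimate with the Bessel weight $(1+p^2)^{-1/2}=\hat J(p)$ on a dense subspace, followed by a density argument. The only cosmetic difference is that the paper routes the computation through physical space---writing $v=(-\Delta+1)^{-1/2}(-\Delta+1)^{1/2}v$, using the convolution kernel $J$ and Fubini, and then invoking Plancherel to identify $\norm{J(-(\cdot))*\nu}_{L_2}$ with the Fourier-side quantity---whereas you go directly to the Fourier side via the Parseval-type identity $\int u\,d\nu=\int\hat u(p)\hat\nu(-p)\,dp$; both arrive at the same Cauchy--Schwarz split.
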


\begin{proof}
  There exists a unique $J\in L_2(\R)$ such that $\hat{J}(p) = \frac{1}{\sqrt{1+p^2}}$ ($p\in\R$) and
  \[J * f = \sqrt{2\pi} (-\Delta+1)^{-1/2}f \quad(f\in L_2(\R)).\]
  Let $v\in C_c^\infty(\R)$. Then
  \begin{align*}
  \int \int \abs{J(x-y)}\abs{v(y)}\, dy \, d\abs{\nu}(x) & \leq \int \norm{J(x-\cdot)}_{L_2(\R)} \norm{v}_{L_2(\R)} \, d\abs{\nu}(x) \\
  & = \norm{J}_{L_2(\R)} \norm{v}_{L_2(\R)} \abs{\nu}(\R) < \infty.
  \end{align*}
  Hence, Fubini's Theorem applies and we obtain
  \begin{align*}
  \abs{\int v\, d\nu} & = \abs{\int (-\Delta+1)^{-1/2}(-\Delta+1)^{1/2}v\, d\nu} \\
  & = \frac{1}{\sqrt{2\pi}}\abs{\int \int J(x-y) (-\Delta+1)^{1/2}v(y)\, dy\, d\nu(x)} \\
  & = \frac{1}{\sqrt{2\pi}}\abs{\int \Bigl(\int J(x-y)\, d\nu(x)\Bigr) (-\Delta+1)^{1/2}v(y)\, dy} \\
  & \leq \frac{1}{\sqrt{2\pi}}\norm{J(-(\cdot))*\nu}_{L_2(\R)} \norm{(-\Delta+1)^{1/2}v}_{L_2(\R)} \\
  & = \norm{\widehat{J(-(\cdot))}\cdot \hat{\nu}}_{L_2(\R)} \norm{v}_{W_2^1(\R)}.
  \end{align*}	
  By density of $C_c^\infty(\R)$ in $W_2^1(\R)$ the assertion follows.
\end{proof}

\begin{lemma}[{\cite[Lemma 1]{BrascheFigariTeta1998}}]
\label{lem:unif_bound_on_fourier}
Let $\nu,\nu_k$ be finite signed Radon measures on $\R$ ($k\in\N$), $\nu_k\to \nu$ weakly. Then $\sup_{k\in\N} \norm{\hat{\nu}_k}_\infty<\infty$.
\end{lemma}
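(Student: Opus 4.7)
The plan is to exploit the duality between finite signed Radon measures on $\R$ and the space $C_0(\R)$, and reduce the claim to the uniform boundedness principle. Recall that by the Riesz representation theorem the space of finite signed Radon measures is (isometrically) the dual of $C_0(\R)$, with the total variation norm $\|\nu\|_{M(\R)} = |\nu|(\R)$. For every $p\in\R$ the function $x\mapsto e^{-\i p x}$ is bounded, so
\[
|\hat\nu_k(p)| = \Bigl|\int e^{-\i p x}\,d\nu_k(x)\Bigr| \leq |\nu_k|(\R),
\]
and hence $\|\hat\nu_k\|_\infty \leq |\nu_k|(\R)$ uniformly in $p$. It therefore suffices to prove that $\sup_{k\in\N} |\nu_k|(\R) < \infty$.

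For that, I would argue as follows. Weak convergence $\nu_k\to\nu$ in the given context means that $\int f\,d\nu_k \to \int f\,d\nu$ for every $f$ in a suitable class of test functions containing $C_0(\R)$ (either $C_0(\R)$ itself, i.e.\ weak-$*$ convergence in $C_0(\R)'$, or $C_b(\R)$, i.e.\ narrow convergence). In either case, for each fixed $f\in C_0(\R)$ the sequence $(\int f\,d\nu_k)_k$ is convergent and thus bounded. Viewing each $\nu_k$ as an element of $C_0(\R)'$, this says that the family of continuous linear functionals $\{\nu_k : k\in\N\}$ is pointwise bounded on the Banach space $C_0(\R)$. The Banach-Steinhaus theorem then yields a uniform bound
\[
\sup_{k\in\N} \|\nu_k\|_{C_0(\R)'} = \sup_{k\in\N} |\nu_k|(\R) < \infty,
\]
which together with the inequality above concludes the proof.

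The only genuinely delicate point is the identification $\|\nu_k\|_{C_0(\R)'} = |\nu_k|(\R)$; this is exactly the content of the Riesz representation theorem for finite signed Radon measures and is a standard fact. Everything else is automatic once weak convergence is interpreted as weak-$*$ convergence (or stronger) in $C_0(\R)'$. So there is no real obstacle: the lemma is a direct consequence of duality plus uniform boundedness, and the computation $|\hat\nu_k(p)|\leq|\nu_k|(\R)$ is trivial.
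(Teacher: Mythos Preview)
Your proof is correct and follows essentially the same idea as the paper: apply the uniform boundedness principle to the sequence $(\nu_k)$ viewed as linear functionals, then bound $\abs{\hat\nu_k(p)}$ by the resulting operator norm. The only cosmetic difference is that the paper works directly on $C_b(\R)$ (so that $e^{-\i t(\cdot)}\in C_b(\R)$ gives $\abs{\hat\nu_k(t)}\leq \tfrac{1}{\sqrt{2\pi}}\norm{\nu_k}_{C_b(\R)'}$ without invoking Riesz), whereas you pass through $C_0(\R)'$ and the identification $\norm{\nu_k}_{C_0(\R)'}=\abs{\nu_k}(\R)$.
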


\begin{proof}
  Weak convergence of $(\nu_k)$ is exactly pointwise convergence of the corresponding linear functionals on $C_b(\R)$.
  The uniform boundedness principle yields $\sup_{k\in\N} \norm{\nu_k}_{C_b(\R)'}
  <\infty$.
  Furthermore, for $k\in\N$ and $t\in\R$ we have
  \[\abs{\hat{\nu}_k(t)} = \abs{\frac{1}{\sqrt{2\pi}} \nu_k(e^{-it(\cdot)})}
  \leq \frac{1}{\sqrt{2\pi}}  \sup_{k\in\N} \norm{\nu_k}_{C_b(\R)'}.\]
  Hence,
  \[\sup_{k\in\N} \norm{\hat{\nu}_k}_\infty<\infty. \qedhere \]
\end{proof}

\begin{theorem}
\label{thm:src}
  Let $(\mu_n)$ in $\M$ be bounded, $\mu \in \M$, $\mu_n\to \mu$ vaguely. Then $H_{\mu_n}\to H_\mu$ in strong resolvent sense.
\end{theorem}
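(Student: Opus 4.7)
The plan is to show strong resolvent convergence by applying the second resolvent identity at the level of forms and then reducing the claim to $\|g(\mu_n-\mu)\|_{W_2^1(\R)'}\to 0$ for every $g\in W_2^1(\R)$, which in turn is handled by Lemmas~\ref{lem:fourier} and~\ref{lem:unif_bound_on_fourier}. First I would choose $\lambda>0$ large enough (and the $\gamma$ from the infinitesimal form bound preceding the theorem small enough) so that, uniformly in $n$,
\[
\form_{\mu_n}(u,u)+\lambda\|u\|_{L_2}^{2}\ge \tfrac12\|u\|_{W_2^1(\R)}^{2}\quad(u\in W_2^1(\R));
\]
this is possible since $\sup_n\|\mu_n\|_\lu<\infty$. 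Setting $A_n:=H_{\mu_n}+\lambda$ and $A:=H_\mu+\lambda$, Lax--Milgram realizes both as isomorphisms from $W_2^1(\R)$ onto $W_2^1(\R)'$ with $\sup_n\|A_n^{-1}\|_{W_2^1(\R)'\to W_2^1(\R)}<\infty$. Identifying $L_2(\R)\hookrightarrow W_2^1(\R)'$ in the usual way, the difference $A-A_n\colon W_2^1(\R)\to W_2^1(\R)'$ is the map $g\mapsto(\phi\mapsto\int g\bar\phi\, d(\mu-\mu_n))$, so the second resolvent identity $A^{-1}-A_n^{-1}=A_n^{-1}(A_n-A)A^{-1}$, evaluated at $f\in L_2(\R)$ and paired with an arbitrary $h\in L_2(\R)$, gives after taking $\sup_{\|h\|_{L_2}=1}$
\[
\|(A^{-1}-A_n^{-1})f\|_{L_2}\le C\,\|g_f(\mu_n-\mu)\|_{W_2^1(\R)'},\qquad g_f:=A^{-1}f\in W_2^1(\R),
\]
where $g(\mu_n-\mu)$ denotes the functional $\phi\mapsto\int g\bar\phi\, d(\mu_n-\mu)$. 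It therefore suffices to show $\|g(\mu_n-\mu)\|_{W_2^1(\R)'}\to 0$ for every $g\in W_2^1(\R)$.

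For $g\in C_c^\infty(\R)$ the signed measure $g\mu_n$ is finite with support in $\spt g$, and for every $\phi\in C_b(\R)$ we have $g\phi\in C_c(\R)$, so vague convergence $\mu_n\to\mu$ upgrades to weak convergence $g\mu_n\to g\mu$ of finite signed measures on $\R$. Testing this weak convergence against $t\mapsto e^{-ipt}\in C_b(\R)$ yields pointwise convergence $\widehat{g\mu_n}(p)\to\widehat{g\mu}(p)$, while Lemma~\ref{lem:unif_bound_on_fourier} gives $M:=\sup_n\|\widehat{g\mu_n}\|_\infty<\infty$. Applying Lemma~\ref{lem:fourier} to the finite signed measure $g\mu_n-g\mu$,
\[
\|g(\mu_n-\mu)\|_{W_2^1(\R)'}^{2}\le\int_\R\frac{|\widehat{g\mu_n}(p)-\widehat{g\mu}(p)|^{2}}{1+p^{2}}\, dp,
\]
whose integrand is pointwise null and dominated by $4M^{2}/(1+p^{2})\in L_1(\R)$; dominated convergence closes this case.

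Finally, the uniform infinitesimal form bound together with Cauchy--Schwarz yields $|\!\int g\bar\phi\, d\mu_n|\le C_0\|g\|_{W_2^1(\R)}\|\phi\|_{W_2^1(\R)}$ with $C_0$ independent of $n$ (and the same for $\mu$), so $\|g(\mu_n-\mu)\|_{W_2^1(\R)'}\le 2C_0\|g\|_{W_2^1(\R)}$ uniformly. A standard $3\eps$-argument using density of $C_c^\infty(\R)$ in $W_2^1(\R)$ therefore extends the previous step from $C_c^\infty(\R)$ to all of $W_2^1(\R)$. The main obstacle is the operator-theoretic setup of the first paragraph, where one has to interpret the "potential" $\mu$ as a continuous map $W_2^1(\R)\to W_2^1(\R)'$ rather than as an operator on $L_2(\R)$, so that the second resolvent identity becomes meaningful; once the reduction to $\|g(\mu_n-\mu)\|_{W_2^1(\R)'}\to 0$ is available, Lemmas~\ref{lem:fourier} and~\ref{lem:unif_bound_on_fourier} are tailor-made to finish the argument.
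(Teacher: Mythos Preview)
Your proof is correct and follows essentially the same route as the paper: the heart of both arguments is showing that $\|u(\mu_n-\mu)\|_{W_2^1(\R)'}\to 0$ for $u\in C_c^\infty(\R)$ via Lemmas~\ref{lem:fourier} and~\ref{lem:unif_bound_on_fourier} together with dominated convergence. The only difference is packaging: where you spell out the Lax--Milgram/resolvent-identity reduction and the density extension to all of $W_2^1(\R)$ by hand, the paper simply verifies form convergence on the core $C_c^\infty(\R)$ and invokes \cite[Theorem~A.1]{StollmannVoigt1996} as a black box to pass to strong resolvent convergence.
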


\begin{proof}
  Let $u\in C_c^\infty(\R)$ and $\nu_k:= u\mu_k$ ($k\in\N$), $\nu:= u\mu$. Then $\nu_k, \nu$ are finite signed
  Radon measures on $\R$ and $\nu_k\to \nu$ weakly.

  Thus $\hat{\nu}_k(p)\to \hat{\nu}(p)$ for all $p\in \R$. Furthermore,
  \[\sup_{k\in\N} \norm{\hat{\nu}_k-\hat{\nu}}_\infty < \infty\]
  by Lemma \ref{lem:unif_bound_on_fourier}.
  We conclude that
  \[\norm{\widehat{J(-(\cdot))}\cdot (\hat{\nu}_k-\hat{\nu})}_{L_2(\R)}\to 0\]
  by Lebesgue's dominated convergence theorem, where $J$ is as in Lemma
\ref{lem:fourier}. In view of Lemma \ref{lem:fourier},
  \[\norm{\form_{\mu_k}(u,\cdot) - \form_{\mu}(u,\cdot)} = \norm{\nu_k-\nu}_{W_2^1(\R)'} \leq \norm{\widehat{J(-(\cdot))}\cdot (\hat{\nu}_k-\hat{\nu})}_{L_2(\R)} \to 0.\]
  Now, \cite[Theorem A.1]{StollmannVoigt1996} yields the assertion.
\end{proof}

We say that $(\Omega,\alpha)$ is \emph{ergodic} with ergodic measure $\P$ if every $\alpha$-invariant measurable subset
$A\subseteq \Omega$ satisfies $\P(A)\in\set{0,1}$. If the ergodic measure $\P$
is unique then $(\Omega,\alpha,\P)$ is \emph{uniquely ergodic}.
Furthermore, $(\Omega,\alpha)$ is called \emph{minimal} if every orbit $\mathcal{O}(\omega):=\set{\alpha_t(\omega);\; t\in\R}$ is dense in $\Omega$.
If $(\Omega,\alpha,\P)$ is uniquely ergodic and minimal, then we call it \emph{strictly ergodic}.

Note that if $(\Omega,\alpha,\P)$ is ergodic, then
$$H_{\alpha_t(\omega)} =
U(-t)H_\omega U(t),$$ where $U$ is the shift group on $L_2(\R)$, i.e. $U(t)f =
f(\cdot-t)$. This property of the family is also sometimes known as \emph{covariance}. It means that  the operator family $(H_\omega)_{\omega\in\Omega}$ is what is known as an  \emph{ergodic family}. Thus, we obtain $\P$-almost sure constancy of spectral information in the following sense.

\begin{proposition}[{\cite[Proposition V.2.4]{CarmonaLacroix1990}}]
\label{prop:ac_const}
Let $\Omega\subseteq \M$ be $\norm{\cdot}_\lu$-bounded, vaguely closed and translation invariant,
$\alpha$ the group action of $\R$ on $\Omega$.
Let $(\Omega,\alpha,\P)$ be ergodic. Then there exists closed subsets $\Sigma,\Sigma_{\bullet}\subseteq \R$ such that for $\P$-a.a.~$\omega\in\Omega$ we have
\[\sigma(H_\omega) = \Sigma,\quad \sigma_\bullet(H_\omega) = \Sigma_\bullet
\quad(\bullet\in\set{\s,\c,\ac,\sc,\pp,\disc}).\]
\end{proposition}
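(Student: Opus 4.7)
The plan is to follow the classical Pastur-type argument as in the cited \cite[Proposition V.2.4]{CarmonaLacroix1990}, adapted to the continuous-time setting developed above. The three ingredients are: covariance (to turn spectral events into $\alpha$-invariant ones), measurability of the spectral data as a function of $\omega$ (coming from strong resolvent continuity), and the $0$--$1$ law from ergodicity.

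\emph{Covariance.} Since $U(t)$ is unitary, $H_{\alpha_t(\omega)} = U(-t) H_\omega U(t)$ implies that, for every Borel set $B \subseteq \R$ and each spectral projection $E^\bullet$ associated with the decomposition $L_2(\R) = \mathcal{H}_\ac \oplus \mathcal{H}_\sc \oplus \mathcal{H}_\pp$, one has $E^\bullet_{\alpha_t(\omega)}(B) = U(-t) E^\bullet_\omega(B) U(t)$. Consequently the event $\{\omega \in \Omega : E^\bullet_\omega(B) \ne 0\}$ is $\alpha$-invariant, and likewise for the events characterizing the essential and discrete parts of the spectrum.

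\emph{Measurability.} Theorem \ref{thm:src} gives that $\omega \mapsto (H_\omega - z)^{-1}$ is strongly continuous on $\Omega$ for each $z \in \C \setminus \R$. Stone's formula then yields Borel measurability of $\omega \mapsto (E_\omega((a,b))f \mid f)$ for every $f \in L_2(\R)$ and every open interval $(a,b) \subseteq \R$. The analogous statements for the ac/sc/pp projections follow by taking the boundary limit $\eps \downarrow 0$ of $\tfrac{1}{\pi}\Im ((H_\omega - \lambda - i\eps)^{-1}f \mid f)$, which recovers the Radon--Nikodym density of the spectral measure $\mu_{\omega,f}$ with respect to Lebesgue measure and so selects the three components via standard measurable operations on spectral measures.

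\emph{Ergodicity and conclusion.} Let $\mathcal{J}$ be the countable collection of open intervals with rational endpoints. For each $(a,b) \in \mathcal{J}$, the set $A^\bullet_{a,b} := \{\omega : E^\bullet_\omega((a,b)) \ne 0\}$ is $\alpha$-invariant and measurable, hence $\P(A^\bullet_{a,b}) \in \{0,1\}$ by ergodicity. Setting
\[
\Sigma_\bullet := \R \setminus \!\!\bigcup_{\substack{(a,b)\in \mathcal{J} \\ \P(A^\bullet_{a,b}) = 0}}\!\! (a,b),
\]
which is closed, and intersecting the countably many probability-one events obtained above, one verifies $\sigma_\bullet(H_\omega) = \Sigma_\bullet$ $\P$-a.s.\ by testing against rational neighborhoods of each candidate spectral point. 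The main obstacle is the measurability of the refined projectors $E^\bullet_\omega$ for the ac/sc/pp decomposition: for the full spectrum, Stone's formula combined with Theorem \ref{thm:src} suffices, but the individual components require control of the boundary limits of the resolvent and careful use of the Lebesgue decomposition. Once this measurability is established, the remainder is a routine application of the ergodic $0$--$1$ law.
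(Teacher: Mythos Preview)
The paper does not give its own proof of this proposition; it is stated with the citation to \cite[Proposition V.2.4]{CarmonaLacroix1990} and no proof environment follows. Your sketch is the standard Pastur-type argument carried out in that reference, adapted to the present continuum setting, and is correct: you use precisely the covariance relation the paper records just before the proposition, together with the strong resolvent continuity of Theorem~\ref{thm:src} to handle measurability, and then the ergodic $0$--$1$ law over rational intervals. There is nothing to compare against in the paper itself; your outline is essentially what one finds in Carmona--Lacroix.
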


Proposition \ref{prop:ac_const} tells us nothing about some particular operator
 $H_\omega$ of the operator family.
However, in case $(\Omega,\alpha)$ is minimal we can obtain constancy of the spectrum as a set. This type of result is well known in the discrete case (see e.g. \cite{Lenz2002,LenzStollmann03} for corresponding results on arbitrary dimensional Delone sets).

\begin{theorem}
Let $\Omega\subseteq \M$ be $\norm{\cdot}_\lu$-bounded, vaguely closed and translation invariant,
$\alpha$ the group action of $\R$ on $\Omega$.
Let $(\Omega,\alpha)$ be minimal. Then there exists $\Sigma\subseteq \R$ such
that
\[\sigma(H_\omega) = \Sigma \quad(\omega\in \Omega).\]
\end{theorem}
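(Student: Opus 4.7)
The plan is to combine two already-established facts: the covariance identity $H_{\alpha_t(\omega)}=U(-t)H_\omega U(t)$, which forces the spectrum to be constant along each orbit of $\alpha$, together with the strong resolvent continuity of $\omega\mapsto H_\omega$ proved in Theorem~\ref{thm:src}. I would fix an arbitrary $\omega_0\in\Omega$ and set $\Sigma:=\sigma(H_{\omega_0})$; the task then reduces to showing that $\sigma(H_\omega)=\Sigma$ for every $\omega\in\Omega$.

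The main spectral tool I would invoke is the classical lower semicontinuity of spectra under strong resolvent convergence: if self-adjoint operators $H_n$ converge to $H$ in strong resolvent sense and $\lambda\in\sigma(H)$, then there exist $\lambda_n\in\sigma(H_n)$ with $\lambda_n\to\lambda$. To obtain $\sigma(H_\omega)\subseteq\Sigma$, I use minimality to pick $(t_n)$ in $\R$ with $\alpha_{t_n}(\omega_0)\to\omega$; Theorem~\ref{thm:src} then gives $H_{\alpha_{t_n}(\omega_0)}\to H_\omega$ in strong resolvent sense, so any $\lambda\in\sigma(H_\omega)$ is a limit of points in $\sigma(H_{\alpha_{t_n}(\omega_0)})=\Sigma$, and closedness of $\Sigma$ delivers $\lambda\in\Sigma$. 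For the reverse inclusion $\Sigma\subseteq\sigma(H_\omega)$ I would swap the roles, using minimality in the opposite direction to choose $(s_n)$ in $\R$ with $\alpha_{s_n}(\omega)\to\omega_0$, and then run the identical argument to conclude $\Sigma=\sigma(H_{\omega_0})\subseteq\sigma(H_\omega)$.

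I do not expect any serious obstacle. The heavy technical input, strong resolvent continuity of the family, is already available via Theorem~\ref{thm:src}, and the remaining ingredients are textbook. The one point requiring care is the two-sided use of minimality: strong resolvent convergence a priori permits the spectrum to shrink in the limit but not to grow, so a single application of orbit density yields only one of the two inclusions, and one needs the symmetric companion argument to close the loop.
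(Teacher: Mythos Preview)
Your proposal is correct and follows essentially the same route as the paper's proof: covariance gives constancy of the spectrum along orbits, Theorem~\ref{thm:src} supplies strong resolvent continuity, and the lower semicontinuity of spectra under strong resolvent convergence (cited in the paper as \cite[Theorem VIII.24]{ReedSimon1}) combined with a two-sided application of minimality yields both inclusions. The only cosmetic difference is that the paper compares two arbitrary points $\omega,\omega'\in\Omega$ directly rather than fixing a basepoint $\omega_0$, but the logic is identical.
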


\begin{proof}
Let $\omega,\omega'\in\Omega$. If $\omega$ and
$\omega'$ are on the same orbit, i.e.~$\mathcal{O}(\omega) =
\mathcal{O}(\omega')$, we obtain $\sigma(H_\omega) =
\sigma(H_{\omega'})$ by unitary equivalence of the corresponding operators.
Otherwise, by minimality, there exists
$(\omega_k)$ in $\mathcal{O}(\omega)$ such that $\omega_k\to \omega'$. Then
 $\sigma(H_{\omega_k}) = \sigma(H_\omega)$ for all $k\in\N$.
By Theorem \ref{thm:src} we have $H_{\omega_k}\to H_{\omega'}$ in strong resolvent sense.
By \cite[Theorem VIII.24]{ReedSimon1} for $E\in \sigma(H_{\omega'})$
there is $E_k \in \sigma(H_{\omega_k})$ ($k\in\N$) satisfying $E_k\to E$.
But $\sigma(H_{\omega_k}) = \sigma(H_{\omega})$ for all $k\in\N$ and
$\sigma(H_{\omega})$ is closed, so $E \in \sigma(H_{\omega})$. Thus,
we have shown $\sigma(H_{\omega'}) \subseteq \sigma(H_{\omega})$. Interchanging
 the roles of $\omega$ and $\omega'$ yields $\sigma(H_{\omega}) \subseteq
\sigma(H_{\omega'})$ and therefore $\sigma(H_{\omega}) =
\sigma(H_{\omega'})$.
\end{proof}

\begin{remark}
A remarkable result due to Last and Simon shows that  minimality implies  even constancy of the absolutely continuous spectrum \cite{LastSimon1999}. The argument is given there for discrete operators as well as for continuum Schr\"odinger operators with potentials which are continuous functions on the dynamical system.
  By \cite{JitomirskayaSimon1994} the singular continuous and the pure point spectra need not be constant.
\end{remark}

\begin{lemma}
\label{lem:disc_spectrum}
Let $(\Omega,\alpha,\P)$ be ergodic.
Then $\Sigma_\disc = \varnothing$. Hence, $\Sigma$ does not contain isolated points.
\end{lemma}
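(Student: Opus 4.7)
The strategy is to assume for contradiction that some $E \in \Sigma_\disc$ and to derive that the corresponding spectral projection $P_\omega := P_\omega(\{E\})$ vanishes $\P$-a.s. From $E \in \Sigma_\disc$ together with Proposition \ref{prop:ac_const}, the operator $P_\omega$ is $\P$-a.s.\ a finite-rank orthogonal projection; moreover $\ker(H_\omega - E)$ embeds into the classical $2$-dimensional solution space of the Sturm--Liouville equation $-u'' + u\mu = Eu$, so its rank $N(\omega)$ is always at most $2$. Since $N(\cdot)$ is measurable and, by the covariance relation $H_{\alpha_t\omega} = U(-t)H_\omega U(t)$, translation invariant, ergodicity gives a constant $N_0 \in \{1,2\}$ with $N(\omega) = N_0$ $\P$-a.s.

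Next I would introduce the bounded measurable function
\[\phi(\omega) := \operatorname{tr}\bigl(\chi_{[0,1]} P_\omega \chi_{[0,1]}\bigr),\]
which takes values in $[0,N_0]$. Since $P_{\alpha_t\omega} = U(-t)P_\omega U(t)$, one checks $\phi(\alpha_s\omega) = \operatorname{tr}\bigl(\chi_{[s,s+1]} P_\omega \chi_{[s,s+1]}\bigr)$. Writing $P_\omega$ through an orthonormal basis $\psi_1,\dots,\psi_{N_0}$ of $\ker(H_\omega - E)$ and applying Fubini gives, for every $T>0$,
\[\int_0^T \phi(\alpha_s\omega)\,ds = \sum_{i=1}^{N_0}\int_\R \abs{\psi_i(x)}^2 \bigl|[x-1,x]\cap[0,T]\bigr|\,dx \le N_0,\]
so $\frac{1}{T}\int_0^T \phi(\alpha_s\omega)\,ds \to 0$ $\P$-a.s.\ as $T\to\infty$.

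Applying the continuous-time Birkhoff ergodic theorem to the bounded function $\phi$ and the ergodic $\R$-flow $\alpha$, these time averages also converge $\P$-a.s.\ to $\int\phi\,d\P$, which therefore has to vanish; as $\phi\ge 0$ this forces $\phi = 0$ $\P$-a.s., and hence $P_\omega\chi_{[0,1]} = 0$ $\P$-a.s. Using covariance once more, $P_\omega\chi_{[t,t+1]} = 0$ for a.e.\ $t\in\R$, so $P_\omega = 0$, contradicting $N_0 \ge 1$. The final sentence of the lemma then follows because any isolated point of the spectrum of a self-adjoint operator is an eigenvalue, and the $2$-dimensionality of the ODE solution space forces such an eigenvalue to have finite multiplicity; so every isolated point of $\Sigma$ would lie in $\Sigma_\disc = \varnothing$.

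The main subtlety I expect is selecting the right form of the ergodic theorem: discrete-time Birkhoff applied to the time-$1$ map $\alpha_1$ is delicate because ergodicity of the $\R$-action need not pass to $\alpha_1$, whereas the continuous-time version handles the flow directly and only requires $\phi \in L_1(\P)$. A secondary point worth verifying is that the eigenfunctions $\psi_i$ are legitimate continuous representatives (so the Fubini computation is unambiguous), which follows from the identification $H_\mu = H_\mu^\SL$ recorded in the remark after the definition of $H_\mu$.
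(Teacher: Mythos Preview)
Your argument is correct and is essentially the standard trace argument that the paper invokes by citing \cite[Proposition V.2.8]{CarmonaLacroix1990}; you have simply written it out in detail rather than citing it. Likewise, your reasoning for the second sentence (isolated points of $\Sigma$ are a.s.\ eigenvalues of multiplicity $\leq 2$, hence lie in $\Sigma_\disc$) matches the paper's, which appeals to \cite[Corollary 8.4]{EckhardtTeschl2011} for the finite-multiplicity step where you use the two-dimensionality of the ODE solution space directly.
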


\begin{proof}
  The first assertion follows the lines of \cite[Proposition V.2.8]{CarmonaLacroix1990}.
  Since isolated points of $\Sigma$ are $\P$-a.s.\ eigenvalues of $(H_\omega)$, in view of \cite[Corollary 8.4]{EckhardtTeschl2011} these points belong to $\Sigma_\disc$.
  Hence, also the second assertion holds true.
\end{proof}

\section{Transfer Matrices and Lyapunov exponents}
\label{sec:Transfer}

Let $\omega\in\Omega$ and $z\in\C$.
We say that $u$ is a \emph{(generalized) solution} of the equation $H_\omega u = zu$ if $u\in C(\R)$ and
\[-\Delta u + \omega u = zu\]
in the sense of distributions.

\begin{remark}
\label{rem:solutions}
  Let $u$ be a solution of the equation $H_\omega u = z u$. Then:
  \begin{enumerate}
    \item
      $\Delta u$ is a local measure and hence $u'$ is locally of bounded variation (since its distributional derivative is locally a signed Radon measure), however may not be continuous.
    \item
      As proven in \cite[Lemma 3.4]{SeifertVogt2013}, for all intervals $I\subseteq\R$ of length $1$ we have
      \[\norm{u'|_I}_\infty \leq C_1 \norm{u|_I}_\infty \leq C_2\norm{u|_I}_2\]
      for some constants $C_1,C_2$ only depending on $z$ and $\norm{\omega}_\lu$. Thus, growth conditions on $u$ also imply such conditions on $u'$.
    \item
      $u$ is uniquely determined by $(u(0),u'(0\rlim))$ (see also \cite[Theorem 2.3]{BenAmorRemling2005}). 
      In fact, for any $(a,b)\in\K^2$ there exists a unique solution $u$ such that $(u(0),u'(0\rlim)) = (a,b)$.
  \end{enumerate}
\end{remark}

\begin{remark}
\label{rem:lpc}
  Note that Green's identity (\cite[Theorem 2.2]{BenAmorRemling2005}) holds true for $H_\omega$.
  Hence, one can prove (along the lines of classical Surm-Liouville theory) that we have limit point case for $H_\omega$ at $\pm\infty$, i.e.,
  for $z\in \C^+:=\set{z\in \C; \Im z>0}$ there exist unique (up to some scalar factor) solutions $u_\pm$ of $H_\omega u = zu$
  such that $u_-\in L_2(-\infty,0)$, $u_+\in L_2(0,\infty)$.
\end{remark}

For $t\in \R$ we can define the transfer matrix mapping the solution of $H_\omega u = z u$ at $0$ to the solution at $t$, i.e.,
\[
  T_z(t,\omega)\from \begin{pmatrix} u(0)\\u'(0\rlim)\end{pmatrix} \mapsto \begin{pmatrix} u(t)\\u'(t\rlim)\end{pmatrix}.
\]

Let $\uN,\uD$ be the solutions of $H_\omega u = zu$ satisfying Neumann and Dirichlet boundary conditions at $0$, respectively, i.e.,
\begin{alignat*}{2}
  \uN(0) & = 1,  &  \uD(0) & = 0, \\
  \uN'(0\rlim) & = 0, \qquad & \uD'(0\rlim) & = 1.
\end{alignat*}
Then
\[T_z(t,\omega) =
\begin{pmatrix} \uN(t) & \uD(t) \\ \uN'(t\rlim) &
\uD'(t\rlim)\end{pmatrix}.\]
Furthermore, $\det T_z(t,\omega) = 1$ since the Wronskian of two solutions to the same equation is constant (see \cite[Proposition 2.5]{BenAmorRemling2005}),
and by uniqueness of solutions we obtain $T_z(s+t,\omega) = T_z(t,\alpha_s(\omega))T_z(s,\omega)$ for all $s,t\in\R$.
Thus, $T_z\from\R\times\Omega\to SL(2,\C)$ forms a cocycle.
For $E\in\R$ the cocycle $T_E$ is even $SL(2,\R)$-valued.

\begin{definition}
  We say that $\Omega$ is \emph{atomless}, if $\omega(\set{t}) = 0$ for all $t\in\R$ and $\omega\in\Omega$, i.e., if every $\omega$ in $\Omega$ is a continuous measure.
\end{definition}
Note that if $\Omega$ is atomless if and only if $T_z(t,\cdot)$ is continuous for all $t\in\R$.

Let $(\Omega,\alpha,\P)$ be ergodic and $E\in\R$. By Kingman's subadditive ergodic theorem (see e.g.\ \cite[Corollary IV.1.3]{CarmonaLacroix1990}) there exists $\gamma(E)\in\R$ such that
\[\frac{1}{t} \ln \norm{T_E(t,\omega)}\to \gamma(E) \quad \P\text{-a.s.}\]
$\gamma(E)$ is called the \emph{Lyapunov exponent} for the energy $E$.
We say that $T_E$ is \emph{uniform} if the limit exists for all $\omega\in\Omega$ and the convergence is uniformly in $\Omega$.

\begin{lemma}
\label{lem:trivial_lyapunov_filtration}
  Let $(\Omega,\alpha,\P)$ be uniquely ergodic and atomless, $E\in\R$.
  Then
  \[\limsup_{t\to\infty} \sup_{\omega\in\Omega} \frac{1}{t}
\ln\norm{T_E(t,\omega)}\leq \gamma(E).\]
  Hence, if $\gamma(E) = 0$ then $T_E$ is uniform.
\end{lemma}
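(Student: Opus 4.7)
The plan is to recognize $f_t(\omega) := \ln\norm{T_E(t,\omega)}$ as a continuous subadditive cocycle over the uniquely ergodic flow $(\Omega,\alpha,\P)$, and then invoke the semi-uniform subadditive ergodic theorem stated in the appendix.

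First I would verify subadditivity: the cocycle identity $T_E(s+t,\omega) = T_E(t,\alpha_s(\omega))\,T_E(s,\omega)$ together with submultiplicativity of the operator norm gives
\[
f_{s+t}(\omega) \leq f_t(\alpha_s(\omega)) + f_s(\omega) \quad (s,t\geq 0,\ \omega\in\Omega).
\]
Next I would check the regularity hypotheses. Atomlessness of $\Omega$ was noted immediately before the statement to be equivalent to the continuity of $\omega\mapsto T_E(t,\omega)$ for each $t$, so each $f_t$ is continuous on the compact metric space $\Omega$ and hence bounded. Standard a priori bounds for solutions (using the uniform estimate on $\norm{\omega}_\lu$ over $\Omega$ together with the growth bounds recorded in Remark~\ref{rem:solutions}) yield $\norm{T_E(t,\omega)} \leq e^{c(E)\,t}$ uniformly in $\omega\in\Omega$ for $t\geq 1$, so $f_t/t$ is uniformly bounded from above. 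Finally, $T_E(t,\omega)\in\SL(2,\R)$ forces $\norm{T_E(t,\omega)}\geq 1$, hence $f_t\geq 0$.

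With these properties in hand, the semi-uniform subadditive ergodic theorem of the appendix applies: over a uniquely ergodic continuous-time dynamical system, a continuous subadditive cocycle satisfies
\[
\limsup_{t\to\infty}\,\sup_{\omega\in\Omega}\frac{f_t(\omega)}{t} \;\leq\; \inf_{t>0}\frac{1}{t}\int_\Omega f_t\,d\P \;=\; \gamma(E),
\]
which is precisely the first assertion.

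For the second assertion, assume $\gamma(E) = 0$. Combining the inequality just displayed with $f_t \geq 0$ yields
\[
0 \;\leq\; \liminf_{t\to\infty}\inf_{\omega\in\Omega}\frac{f_t(\omega)}{t} \;\leq\; \limsup_{t\to\infty}\sup_{\omega\in\Omega}\frac{f_t(\omega)}{t} \;\leq\; 0,
\]
so $\tfrac{1}{t}\ln\norm{T_E(t,\omega)}\to 0$ uniformly in $\omega\in\Omega$, which is exactly uniformity of $T_E$. The main obstacle is the semi-uniform subadditive ergodic theorem itself, which is non-trivial in the continuous-time setting and is the content of the appendix; once it is available, the reduction above is just a verification of the hypotheses.
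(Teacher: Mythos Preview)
Your proof is correct and follows essentially the same route as the paper: set $X_t = \ln\norm{T_E(t,\cdot)}$, verify the hypotheses of the semi-uniform subadditive ergodic theorem in the appendix (Proposition~\ref{prop:subadditive_est}), and deduce the first assertion; then use $X_t \geq 0$ (from $\det T_E = 1$) to squeeze out uniformity when $\gamma(E)=0$. The only difference is that you spell out the verification of subadditivity, continuity, and boundedness in more detail than the paper does.
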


\begin{proof}
  Defining $X_t:= \ln\norm{T_E(t,\cdot)}$ the first assertion is a direct consequence of Proposition \ref{prop:subadditive_est}.
  Since $\det T_E(t,\cdot) = 1$ we have $X_t\geq 0$ for all $t$ and therefore $\gamma(E) \geq 0$. Hence, the second assertion is trivial.
\end{proof}

If $T_E$ is uniform and $\gamma(E)>0$ then $T_E$ is sometimes called \emph{uniformly hyperbolic}.
At least if $\Omega$ is atomless we have a characterization of uniform hyperbolicity by means of an exponential splitting.
The proof follows very closely the lines of \cite[Theorem 3]{Lenz2004} for the discrete case (with the obvious modifications for the continuum setting), for details  see also \cite[Theorem 5.2.8]{Seifert2012}. Here, we only state the result.

\begin{theorem}
\label{thm:equivalence_uniformity}
	Let $(\Omega,\alpha,\P)$ be uniquely ergodic and atomless, $E\in\R$.
	Then the following assertions are equivalent:
	\begin{enumerate}
		\item
			$T_E$ is uniform and $\gamma(E)>0$.
		\item
			There exist constants $\kappa,C>0$ and $u,v\in
C(\Omega,\mathcal{P}(\R^2))$ with
			\[\norm{T_E(t,\omega)U}\leq Ce^{-\kappa t} \norm{U}
\quad\text{and}\quad \norm{T_E(-t,\omega)V}\leq Ce^{-\kappa t}
\norm{V}\]
			for all $\omega\in\Omega$, $t\geq 0$, $U\in u(\omega)$ and $V\in
v(\omega)$.
	\end{enumerate}
\end{theorem}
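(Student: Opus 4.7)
The strategy is to adapt the discrete-time cocycle argument of \cite[Theorem 3]{Lenz2004} to the continuous-time setting. The only genuinely new ingredient is the role of the atomless hypothesis: it is what makes $T_E(t,\cdot)$ continuous on $\Omega$ for each fixed $t$, a property that is automatic in the discrete case but not here.

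For the easier direction (b)$\Rightarrow$(a), I would first note that the two exponential estimates force $u(\omega)\cap v(\omega)=\set{0}$ for every $\omega$, since a common vector would decay both forward and backward in time, contradicting $\det T_E(t,\omega)=1$. Continuity of $u,v$ into $\mathcal{P}(\R^2)$ over the compact base $\Omega$ then yields a uniform lower bound on the angle between $u(\omega)$ and $v(\omega)$. Picking (locally) continuous unit generators $U(\omega)\in u(\omega)$ and $V(\omega)\in v(\omega)$, every $W\in\R^2$ decomposes as $W=aU(\omega)+bV(\omega)$ with $\abs{a}+\abs{b}\le c\norm{W}$ for a uniform $c$. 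The first bound in (b) gives $\norm{T_E(t,\omega)U(\omega)}\le Ce^{-\kappa t}$; the cocycle identity together with invariance of $v$ under $T_E(t,\omega)$ (an immediate consequence of its characterization by backward decay) and the second inequality of (b) yield $\norm{T_E(t,\omega)V(\omega)}\ge C^{-1}e^{\kappa t}$. Combining these with $\det T_E(t,\omega)=1$, which forces a matching upper bound $\norm{T_E(t,\omega)}\le c'e^{\kappa t}$, one concludes that $\frac{1}{t}\ln\norm{T_E(t,\omega)}\to\kappa$ uniformly on $\Omega$. Hence $T_E$ is uniform and $\gamma(E)=\kappa>0$.

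The substantial direction is (a)$\Rightarrow$(b). Set $\gamma:=\gamma(E)>0$ and fix $\eps\in(0,\gamma)$. Uniformity supplies $T_0>0$ with $e^{(\gamma-\eps)t}\le\norm{T_E(t,\omega)}\le e^{(\gamma+\eps)t}$ for all $t\ge T_0$ and $\omega\in\Omega$. Since $T_E(t,\omega)\in\SL(2,\R)$, its singular value decomposition provides a unit vector $U_t(\omega)$ — the most contracted direction — with $\norm{T_E(t,\omega)U_t(\omega)}\le e^{-(\gamma-\eps)t}$. The plan is to show, via the cocycle identity $T_E(s+t,\omega)=T_E(s,\alpha_t(\omega))T_E(t,\omega)$, that $[U_t(\omega)]\in\mathcal{P}(\R^2)$ is Cauchy in $t$ uniformly in $\omega$, with the projective distance between $[U_s(\omega)]$ and $[U_t(\omega)]$ of order $e^{-2(\gamma-\eps)\min(s,t)}$. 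This is the standard cone-contraction estimate for $\SL(2,\R)$ matrices: a very hyperbolic matrix pinches $\mathcal{P}(\R^2)$ towards its most expanded direction with an angular gain proportional to the squared singular value. Define $u(\omega):=\lim_{t\to\infty}[U_t(\omega)]$. Continuity of $\omega\mapsto u(\omega)$ follows from the uniformity of the Cauchy estimate combined with the continuity of each finite-$t$ approximant $\omega\mapsto[U_t(\omega)]$, which is where the atomless hypothesis is used: it makes $\omega\mapsto T_E(t,\omega)$ continuous, whence the SVD and the most contracted direction depend continuously on $\omega$. The subspace $v$ is constructed symmetrically from $T_E(-t,\omega)$, and the exponential bounds in (b) with $\kappa:=\gamma-\eps$ then follow by passing to the limit in the finite-$t$ estimate defining $U_t(\omega)$ and its backward counterpart.

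The main obstacle is the Cauchy-in-projective-space estimate for $[U_t(\omega)]$ together with the transfer of its uniformity in $\omega$ through the limit to continuity of $u$; once this is in hand, the exponential bounds and the invariance of $u,v$ are immediate from the definitions and the cocycle identity. Everything else is bookkeeping, and the atomless hypothesis enters exactly once, in ensuring that the finite-time approximants are continuous in $\omega$.
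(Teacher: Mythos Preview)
Your overall strategy matches the paper's: it does not give an independent proof but refers to \cite[Theorem 3]{Lenz2004} and \cite[Theorem 5.2.8]{Seifert2012}, remarking that the discrete argument carries over with ``obvious modifications for the continuum setting''. Your identification of the atomless hypothesis as precisely the ingredient guaranteeing continuity of $\omega\mapsto T_E(t,\omega)$, and your outline of the (a)$\Rightarrow$(b) direction via the most contracted singular direction and a uniform projective Cauchy estimate, are exactly in this spirit and are correct.

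There is, however, a genuine gap in your sketch of (b)$\Rightarrow$(a). From $\norm{T_E(t,\omega)U(\omega)}\le Ce^{-\kappa t}$ and $\norm{T_E(t,\omega)V(\omega)}\ge C^{-1}e^{\kappa t}$ you claim that ``$\det T_E(t,\omega)=1$ forces a matching upper bound $\norm{T_E(t,\omega)}\le c'e^{\kappa t}$'' and hence $\gamma(E)=\kappa$. This is false: the hypotheses of (b) are one-sided, and a diagonal cocycle $\mathrm{diag}(e^{-2\kappa t},e^{2\kappa t})$ in the $(U,V)$ frame satisfies all of them with $\det=1$, yet has norm $e^{2\kappa t}$. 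In general one only gets $\gamma(E)\ge\kappa$, and your argument as written does not establish uniformity. The standard repair is to note that, since $T_E$ preserves the continuous line bundle $v$, the scalar $t\mapsto \ln\norm{T_E(t,\omega)|_{v(\omega)}}$ is a \emph{continuous additive} cocycle (continuity uses atomlessness again, via continuity of $T_E(t,\cdot)$ and of $v$); Proposition~\ref{thm:additive_conv} then yields uniform convergence of $\tfrac{1}{t}\ln\norm{T_E(t,\omega)V(\omega)}$ to some limit $\gamma'\ge\kappa$. Since $\norm{T_E(t,\omega)}$ is comparable to $\norm{T_E(t,\omega)V(\omega)}$ for large $t$ (the $u$-component decays), uniformity of $T_E$ with $\gamma(E)=\gamma'>0$ follows.
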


Here $\mathcal{P}(\R^2)$ denotes the projective line, i.e., the set of directions in $\R^2$ (two directions may be identified if they span the same one-dimensional subspace).

\section{Characterization of the Spectrum}
\label{sec:CharSpec}

In this section we characterize the spectrum as a set.

\begin{lemma}
\label{lem:uniform_implies_zeroset_cont}
	Let $(\Omega,\alpha,\P)$ be strictly ergodic and atomless, $T_E$ uniform for every $E$ in
$\R$. Then for the ($\omega$-independent) spectrum we have
	$\Sigma = \set{E\in\R;\; \gamma(E) = 0}$.
\end{lemma}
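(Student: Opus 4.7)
The plan is to prove the two inclusions $\set{E : \gamma(E) = 0} \subseteq \Sigma$ and $\Sigma \subseteq \set{E : \gamma(E) = 0}$ separately; since $\gamma(E) \ge 0$ by Lemma \ref{lem:trivial_lyapunov_filtration}, the second inclusion amounts to showing that whenever $\gamma(E) > 0$, the energy $E$ lies in the resolvent set of every $H_\omega$.

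For $\set{E : \gamma(E) = 0} \subseteq \Sigma$ I would carry out a Shnol-type Weyl sequence construction. Fix $\omega \in \Omega$ and $E$ with $\gamma(E) = 0$. Uniformity of $T_E$, together with $\det T_E \equiv 1$ (which transfers the bound to negative times), gives $\frac{1}{\abs{t}}\ln\norm{T_E(t,\omega)} \to 0$ uniformly as $\abs{t}\to\infty$, so the Neumann solution $u = \uN$ satisfies $\abs{u(t)} \le C_\eps e^{\eps\abs{t}}$ for every $\eps > 0$, with the analogous bound on $u'$ furnished by Remark \ref{rem:solutions}(b). Either $u \in L_2(\R)$, in which case $u$ is an eigenfunction of $H_\omega$ and $E \in \sigma(H_\omega)$, or $N(R) := \int_{-R}^R \abs{u}^2 \to \infty$. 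In the second case $N(R) \le Ce^{2\eps R}$ for arbitrarily small $\eps$, and a standard averaging argument produces a sequence $R_n \to \infty$ with $\bigl(N(R_n + 1) - N(R_n)\bigr)/N(R_n) \to 0$. Cut-offs $\chi_n \in C_\c^\infty(\R)$ equal to $1$ on $[-R_n,R_n]$, supported in $[-R_n - 1, R_n + 1]$, with uniformly bounded derivatives, then make $\phi_n := \chi_n u \in W_2^1(\R)$ a form-Weyl sequence: because $u$ solves the eigenequation on $(-R_n, R_n)$, the functional $\form_\omega(\phi_n,\cdot) - E\sp{\phi_n}{\cdot}$ is supported on the boundary annuli and has $W_2^1(\R)'$-norm controlled by $C\bigl(N(R_n+1) - N(R_n)\bigr)^{1/2}$, whereas $\norm{\phi_n}_2^2 \ge N(R_n)$. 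Hence $E \in \sigma(H_\omega) = \Sigma$.

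For the reverse inclusion assume $\gamma(E) > 0$. By Theorem \ref{thm:equivalence_uniformity} the uniformity hypothesis yields continuous sections $u, v \from \Omega \to \mathcal{P}(\R^2)$ and constants $\kappa, C > 0$ with exponential forward contraction on $u(\omega)$ and exponential backward contraction on $v(\omega)$. These two directions are transverse at every $\omega$: a common vector $W$ would satisfy $\norm{T_E(t,\omega) W} \to 0$ as $t \to \pm\infty$, contradicting $\det T_E \equiv 1$. Fixing $\omega$ and choosing representatives $U \in u(\omega)$, $V \in v(\omega)$ with unit Wronskian, the generalized solutions $u_+, u_-$ of $H_\omega u = Eu$ with initial data $U, V$ at $0$ decay exponentially at $+\infty$ and $-\infty$ respectively. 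Since $H_\omega$ is in the limit point case (Remark \ref{rem:lpc}), Green's identity \cite[Theorem 2.2]{BenAmorRemling2005} shows that the integral operator with kernel $G_E(x,y) := u_-(x \wedge y)\, u_+(x \vee y)$ inverts $H_\omega - E$, and Schur's test combined with the exponential decay of $u_\pm$ shows it is bounded on $L_2(\R)$. Hence $E \in \rho(H_\omega)$, so $E \notin \Sigma$.

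The main technical obstacle is the first direction: because $H_\omega$ is defined through the form $\form_\omega$ and the potential is only a measure, the Weyl sequence condition must be read in the form-dual sense, and the boundary error produced by the cut-off has to be estimated in $W_2^1(\R)'$ using the pointwise control of $u'$ in terms of $u$ on unit intervals from Remark \ref{rem:solutions}(b). The second direction is comparatively routine given Theorem \ref{thm:equivalence_uniformity}, the only delicate point being to justify that the Green's kernel really inverts the form-realization of $H_\omega - E$ when the potential is a general signed measure.
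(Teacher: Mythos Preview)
Your proof is correct. For $\{\gamma = 0\} \subseteq \Sigma$ it is the same Sch'nol argument the paper uses --- you carry it out by hand, the paper cites \cite[Theorem~4.4]{BoutetdeMonvelLenzStollmann2009}. For the reverse inclusion the routes diverge: the paper argues by contradiction, using Coppel's theorem to see that $\{\gamma > 0\}$ is open, then the converse Sch'nol result \cite[Theorem~4.6]{BoutetdeMonvelLenzStollmann2009} to produce a subexponentially bounded nonzero solution at any spectral point, and finally the exponential splitting of Theorem~\ref{thm:equivalence_uniformity} to force every such solution to grow exponentially. You instead go directly: from the splitting you write down the Green's kernel $G_E(x,y) = u_-(x\wedge y)\,u_+(x\vee y)$ and exhibit the resolvent explicitly. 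This bypasses both Coppel's theorem and the converse Sch'nol machinery, at the price of having to check that the Green's-function construction really inverts the form realization of $H_\omega - E$ for measure potentials --- which, as you note, is the delicate point, but is covered by the Sturm--Liouville theory for measures in \cite{BenAmorRemling2005,EckhardtTeschl2011}. One quibble: transversality $u(\omega)\neq v(\omega)$ does not follow from $\det T_E\equiv 1$ alone, since a single direction decaying at both time ends is compatible with determinant one (the complementary direction then grows at both ends). The paper simply reads transversality as part of the exponential splitting in Theorem~\ref{thm:equivalence_uniformity}; alternatively, invariance of $\{u=v\}$ under $\alpha$ combined with minimality and a round-trip estimate $\|W\| \le C^2 e^{-2\kappa s}\|W\|$ gives the contradiction.
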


\begin{proof}
	Set $\mathcal{Z}:=\set{E\in\R;\; \gamma(E) = 0}$.

	Let $\omega\in \Omega$.
	Write
	\begin{align*}
		\mathcal{S} := \big\{E\in \R;\;  & \mbox{ $\forall$ solutions $u$ of $H_\omega u =
Eu$ $\forall$ $\kappa>0$ $\exists$ $C>0$:} \\
		& \abs{u(t)}\leq Ce^{\kappa\abs{t}} \; (t\in\R)\big\},
	\end{align*}
	for the set of energies such that all solutions of the Schr\"odinger equation are subexponentially bounded.

	First of all we show that $\mathcal{Z} \subseteq \mathcal{S}$.
	Let $E\in \mathcal{Z}$. Then
	\[\lim_{t\to \pm \infty} \frac{1}{\abs{t}} \ln \norm{T_E(t,\omega)} = 0.\]
	Hence, for all $\kappa>0$ there is $t_0>0$ such that	
	\[\frac{1}{\abs{t}} \ln \norm{T_E(t,\omega)} \leq \kappa \quad
(\abs{t}>t_0),\]
	i.e. $\norm{T_E(t,\omega)}\leq e^{\kappa\abs{t}}$ for $\abs{t}>t_0$.
	We conclude that $E\in \mathcal{S}$.

``$\mathcal{Z}\subseteq \Sigma$'':
	Let $E\in \mathcal{Z}\subseteq \mathcal{S}$ and $u\neq 0$ be a solution of $H_\omega u = Eu$.
	Then $u$ is subexponentially bounded and by Sch'nol-type arguments (see \cite[Theorem 4.4]{BoutetdeMonvelLenzStollmann2009}) we
 conclude that $E \in \sigma(H_\omega) = \Sigma$.

	``$\Sigma\subseteq \mathcal{Z}$'':
	We have to show that $\Sigma = \sigma(H_\omega)\subseteq
 \mathcal{Z}$. We prove this by contradiction. Assume there is spectrum in
$\complement\mathcal{Z}$. By Coppel's Theorem, see e.g.\ \cite[Theorem 3.1]{Johnson1987}, we can deduce that
 $\complement \mathcal{Z}$ is
open and hence the spectral measures of $H_\omega$ give weight to $\complement
\mathcal{Z}$. Therefore, there is $E \in \complement \mathcal{Z}\cap
\sigma(H_\omega)$ admitting a subexponentially bounded solution $u\neq 0$ of
$H_\omega u = Eu$ (see \cite[Theorem 4.6]{BoutetdeMonvelLenzStollmann2009}).

	By Theorem \ref{thm:equivalence_uniformity} there exist $\kappa,C>0$ and $u(\omega),v(\omega)\in \mathcal{P}(\R^2)$ such that
	\[\norm{T_E(t,\omega) U}\leq Ce^{-\kappa t} \norm{U},\quad \norm{T_E(-t,\omega) V}\leq Ce^{-\kappa t} \norm{V}\]
	for all $t\geq 0$, $U\in u(\omega)$, $V\in v(\omega)$,
	and $u(\omega)\neq v(\omega)$. Hence, there exist $U\in u(\omega)$ and
$V\in v(\omega)$ such that
	\[\begin{pmatrix} u(0) \\ u'(0\rlim) \end{pmatrix} = U + V.\]
	Furthermore, for $t\in\R$,
	\[\norm{\begin{pmatrix} u(t) \\ u'(t\rlim) \end{pmatrix}} = \norm{T_E(t,\omega)
\begin{pmatrix} u(0) \\ u'(0\rlim) \end{pmatrix}} \geq
\bigl\lvert{\norm{T_E(t,\omega) U} - \norm{T_E(t,\omega)V}}\bigr\rvert.\]
	Since the right-hand side is exponentially growing as $\abs{t}\to\infty$,
in view of Remark \ref{rem:solutions}(b) also $u$ is exponentially
growing. This contradicts the fact that $u$ is subexponentially bounded.
\end{proof}

\begin{lemma}
\label{lem:resolvent_implies_positive_and_uniform}
	Let $(\Omega,\alpha)$ be strictly ergodic and atomless, $E\in\R\setminus \Sigma$. Then
$T_E$ is uniformly hyperbolic.
\end{lemma}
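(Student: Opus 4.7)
The plan is to combine the absence of spectrum at $E$ with the Weyl limit point description (Remark \ref{rem:lpc}) to build a continuous exponential splitting on $\Omega$, and then invoke Theorem \ref{thm:equivalence_uniformity}.

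First I would establish that $\gamma(E)>0$ by contradiction. If $\gamma(E)=0$, then Lemma \ref{lem:trivial_lyapunov_filtration} gives that $T_E$ is uniform with limit $0$, so every solution $u$ of $H_\omega u=Eu$ grows subexponentially (using Remark \ref{rem:solutions}(b) to transfer the bound from the transfer matrix to $u$ itself). The Sch'nol-type argument already used in Lemma \ref{lem:uniform_implies_zeroset_cont} (i.e.\ \cite[Theorem 4.4]{BoutetdeMonvelLenzStollmann2009}) then produces $E\in\sigma(H_\omega)=\Sigma$, contradicting $E\notin\Sigma$.

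Next I would build the candidate directions $u,v\colon\Omega\to\mathcal{P}(\R^2)$ needed for condition (b) of Theorem \ref{thm:equivalence_uniformity}. Since $E\notin\Sigma=\sigma(H_\omega)$ for every $\omega$, the limit point property (Remark \ref{rem:lpc}), extended to energies in a spectral gap via Weyl disc theory, produces for each $\omega$ unique (up to scalar) non-trivial solutions $u_\pm^\omega$ of $H_\omega u=Eu$ lying in $L_2$ at $\pm\infty$; they are linearly independent because $E$ is not an eigenvalue of $H_\omega$. Setting $u(\omega)$, $v(\omega)$ to be the directions in $\mathcal{P}(\R^2)$ of the initial data $(u_+^\omega(0),(u_+^\omega)'(0\rlim))$ and $(u_-^\omega(0),(u_-^\omega)'(0\rlim))$, one has $u(\omega)\neq v(\omega)$. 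Continuity of $\omega\mapsto u(\omega),v(\omega)$ will follow from Theorem \ref{thm:src} together with the atomlessness of $\Omega$ (which gives continuity of $T_E(t,\cdot)$ and of the associated Weyl $m$-functions under strong resolvent convergence): if $\omega_n\to\omega$, then the Weyl disks associated with $H_{\omega_n}$ at energies with small positive imaginary part converge to those of $H_\omega$, and passing to the real boundary in the gap containing $E$ gives convergence of the Weyl directions.

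It remains to upgrade the $L_2$-decay of $u_\pm^\omega$ to \emph{uniform exponential} decay in the sense of condition (b). Here I would exploit the positivity of $\gamma(E)$ together with unique ergodicity. By Oseledets applied to the $SL(2,\R)$-cocycle $T_E$, for $\P$-a.e.\ $\omega$ there is a one-dimensional stable direction along which $T_E(t,\omega)$ contracts at rate $\gamma(E)$; the $L_2$ requirement forces $u(\omega)$ to be this stable direction on a full $\P$-measure set, and similarly for $v(\omega)$ via $T_E(-t,\omega)$. By minimality and the continuity of $u,v$ established above, the identification $u(\omega)=\text{stable direction}$, $v(\omega)=\text{unstable direction}$ extends to all $\omega\in\Omega$. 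Applying Lemma \ref{lem:trivial_lyapunov_filtration} to the restricted subadditive processes $t\mapsto \ln\norm{T_E(t,\omega)U(\omega)}$ and $t\mapsto\ln\norm{T_E(-t,\omega)V(\omega)}$ (with $U(\omega)\in u(\omega)$, $V(\omega)\in v(\omega)$ of unit length, depending continuously on $\omega$) yields
\[\limsup_{t\to\infty}\sup_{\omega\in\Omega}\frac{1}{t}\ln\norm{T_E(t,\omega)U(\omega)}\le -\gamma(E)<0,\]
and analogously for $V$; this is precisely condition (b) of Theorem \ref{thm:equivalence_uniformity}, whence $T_E$ is uniformly hyperbolic.

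The main obstacle is the last step: the semi-uniform subadditive ergodic theorem from the appendix gives upper bounds on $\sup_\omega \frac{1}{t}\ln\norm{T_E(t,\omega)}$, but what is needed here is the \emph{contraction} rate along the specific directions $u(\omega),v(\omega)$. Making this rigorous requires identifying $u,v$ with the Oseledets directions on a full-measure set, propagating the identification by continuity and minimality to all of $\Omega$, and then applying the semi-uniform theorem to the continuous subadditive process $\ln\norm{T_E(t,\cdot)U(\cdot)}$ (which is bounded above by $0$ because $u_+^\omega$ lies in $L_2(0,\infty)$ and is non-zero).
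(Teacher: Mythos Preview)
Your approach differs substantially from the paper's. You first establish $\gamma(E)>0$ and then try to bootstrap uniform exponential decay along the Weyl directions via Oseledets and the semi-uniform ergodic theorem. The paper instead uses a Combes--Thomas estimate: since $E\in\rho(H_\omega)$ for every $\omega$ (by minimality), applying the resolvent $(H_\omega-E)^{-1}$ to a suitable compactly supported function produces a half-line $L_2$ solution decaying exponentially with rate $\kappa>0$ and constant $C$ that depend only on $\dist(E,\Sigma)$ and $\sup_\omega\norm{\omega}_\lu$, hence \emph{uniformly in $\omega$}. This gives condition~(b) of Theorem~\ref{thm:equivalence_uniformity} directly; continuity of $u,v$ then comes from \cite[Lemma~7]{SackerSell1974}. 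No Lyapunov exponent, Oseledets theorem, or ergodic-theoretic input is needed, and $\gamma(E)>0$ becomes a corollary rather than a hypothesis to be verified first.

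Your route can be made to work, but several steps need repair. The process $X_t(\omega)=\ln\norm{T_E(t,\omega)U(\omega)}$ (with $U(\omega)\in u(\omega)$ of unit length) is in fact \emph{additive}, not merely subadditive, because the Weyl direction is cocycle-invariant: the $L_2$-at-$+\infty$ solution for $\omega$, shifted by $t$, is the $L_2$-at-$+\infty$ solution for $\alpha_t(\omega)$. This invariance must be checked explicitly (without it subadditivity in the required form fails), and then Proposition~\ref{thm:additive_conv}, not Lemma~\ref{lem:trivial_lyapunov_filtration}, is the right tool. Your parenthetical ``bounded above by $0$ because $u_+^\omega\in L_2(0,\infty)$'' is wrong: square-integrability gives no pointwise bound $\le 1$; the required estimate $\sup_{t\in[0,1],\,\omega}\abs{X_t(\omega)}<\infty$ instead follows from the uniform bound on $\norm{T_E(t,\omega)}$ for $t\in[0,1]$, $\omega\in\Omega$ (atomlessness plus compactness) together with $\det T_E=1$. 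Finally, your continuity argument for $u,v$ via Weyl disks at a real gap energy is only a sketch; the paper sidesteps this by invoking Sacker--Sell directly once the exponential splitting is in hand.
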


\begin{proof}
By minimality, $E\in \rho(H_\omega)$ for all $\omega\in\Omega$.

Let $\omega\in\Omega$.
We show: there exist vectors $U(\omega),V(\omega)\in
\R^2$ such that
$\norm{T_E(t,\omega)U(\omega)}$ decays exponentially for $t\to\infty$ and
$\norm{T_E(t,\omega)V(\omega)}$ decays exponentially for $t\to-\infty$.

Let $t_0<0$.
Since we have limit point case at $-\infty$
there exists $(a,b)\in \R^2\setminus\set{(0,0)}$
such that for solutions $u$ of $H_\omega u= Eu$ with $(u(t_0),u'(t_0\rlim)) \in \lin\set{(a,b)}$
(the linear span of $\set{(a,b)}$)
we have $u\notin L_2(-\infty,t_0)$.
Let $v\in D(H_\omega)$ such that
\[\begin{pmatrix} v(t_0)\\v'(t_0)\end{pmatrix} = \begin{pmatrix}
a\\b\end{pmatrix},
\quad \begin{pmatrix} v(0)\\v'(0)\end{pmatrix} = \begin{pmatrix}
0\\0\end{pmatrix}.\]


Set $\tilde{v}:= \1_{(t_0,0)}(H_{\omega} - E)v \in L_2(\R)$ and define
$u:=(H_\omega-E)^{-1}\tilde{v} \in L_2(\R)$. Note that $u$ is a solution of $H_\omega u = Eu + \tilde{v}$ and hence a solution of
$H_\omega u = Eu$ on $[0,\infty)$.
Then $(u(0),u'(0\rlim))\neq (0,0)$, for if $(u(0),u'(0\rlim))=(0,0)$, then $u|_{(t_0,0)} = v|_{(t_0,0)}$ and hence
\[\begin{pmatrix} u(t_0)\\u'(t_0\rlim)\end{pmatrix} = \begin{pmatrix} a\\b\end{pmatrix}.\]
But this would imply $u\notin L_2(-\infty,t_0)$ and therefore $u\notin L_2(\R)$. Hence, $u$ cannot vanish on $[0,\infty)$.

By Combes-Thomas arguments (see e.g.~\cite[Theorem 2.4.1]{Stollmann2001} for a version for forms which also works for measures as potentials)
there exist $C\geq 0$ and $\kappa>0$ (not depending on $\omega$) such that
\[\norm{\1_{(t-\frac{1}{2},t+\frac{1}{2})} u}_{L_2(\R)} \leq Ce^{-\kappa t}
\quad(t\geq 0).\]
By Remark \ref{rem:solutions}(b) then also
\[\norm{\begin{pmatrix} u(t) \\ u'(t\rlim)\end{pmatrix}} \leq \tilde{C}e^{-\kappa t} \quad(t\geq 0)\]
for some $\tilde{C}\geq 0$.

Hence, the initial condition $U(\omega)=(u(0),u'(0\rlim))$ gives rise to a solution of the
Schr\"odinger equation $H_\omega u = Eu$ which decays exponentially for $t\to\infty$ and does not vanish on $[0,\infty)$.
This yields an element
$u(\omega) = [U(\omega)]_{\mathcal{P}(\R^2)} \in \mathcal{P}(\R^2)$.

Analogously, we find $v(\omega)\in \mathcal{P}(\R^2)$ such that the corresponding solutions decay exponentially for $t\to-\infty$.

We have $u(\omega)\neq v(\omega)$. Indeed, in case $u(\omega) = v(\omega)$,
such an initial condition would yield an $L_2(\R)$-solution of $H_\omega u =
Eu$, i.e., $E\notin \sigma(H_\omega)$ would be an eigenvalue of $H_\omega$.

Therefore, $T_E$ admits an exponential splitting (note that the constants $\kappa$ and $C$ can be chosen uniformly on $\Omega$).
By \cite[Lemma 7]{SackerSell1974}, $\omega\mapsto u(\omega)$ and $\omega\mapsto
 v(\omega)$ are continuous.
By Theorem \ref{thm:equivalence_uniformity} we conclude that $T_E$ is uniformly hyperbolic.
\end{proof}

\begin{theorem}
\label{thm:char_spectrum}
	Let $(\Omega,\alpha,\P)$ be strictly ergodic and atomless. Then
	\[\Sigma = \set{E\in\R;\; \gamma(E) = 0} \cup\set{E\in\R;\; T_E \text{ is
not uniform}},\]
	where the union is disjoint.
\end{theorem}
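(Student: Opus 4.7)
The strategy is to assemble the theorem from the two preceding lemmas, using Lemma~\ref{lem:trivial_lyapunov_filtration} as the bridge that links the hypothesis $\gamma(E)=0$ with uniformity of $T_E$. Write $\mathcal{Z}:=\set{E\in\R;\; \gamma(E)=0}$ and $\mathcal{NU}:=\set{E\in\R;\; T_E\text{ is not uniform}}$.

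First I would dispose of disjointness. If $E\in\mathcal{NU}$, then $T_E$ is not uniform, so by Lemma~\ref{lem:trivial_lyapunov_filtration} we must have $\gamma(E)>0$, whence $E\notin\mathcal{Z}$. This also clarifies the role of atomlessness: it is needed precisely to invoke Lemma~\ref{lem:trivial_lyapunov_filtration} (via the semi-uniform ergodic theorem in the appendix).

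Next I would show $\mathcal{Z}\cup\mathcal{NU}\subseteq\Sigma$. For $E\in\mathcal{Z}$, Lemma~\ref{lem:trivial_lyapunov_filtration} gives that $T_E$ is uniform with zero Lyapunov exponent. Thus for every $\omega\in\Omega$, $\frac{1}{\abs{t}}\ln\norm{T_E(t,\omega)}\to 0$, so every nonzero solution of $H_\omega u=Eu$ is subexponentially bounded (using Remark~\ref{rem:solutions}(b) to pass from $T_E$-norms to pointwise bounds on $u$ and $u'$). Sch'nol-type arguments (\cite[Theorem 4.4]{BoutetdeMonvelLenzStollmann2009}) then yield $E\in\sigma(H_\omega)=\Sigma$. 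For $E\in\mathcal{NU}$, I argue by contraposition: if $E\notin\Sigma$, then by Lemma~\ref{lem:resolvent_implies_positive_and_uniform} the cocycle $T_E$ is uniformly hyperbolic and in particular uniform, so $E\notin\mathcal{NU}$.

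Finally I would establish $\Sigma\subseteq\mathcal{Z}\cup\mathcal{NU}$, which I expect to be the only nontrivial step. Suppose $E\notin\mathcal{Z}\cup\mathcal{NU}$, i.e., $T_E$ is uniform and $\gamma(E)>0$. Then Theorem~\ref{thm:equivalence_uniformity} provides an exponential splitting $u(\omega),v(\omega)\in\mathcal{P}(\R^2)$ with $u(\omega)\neq v(\omega)$ (otherwise one would get a contracting direction in both time directions, contradicting $\det T_E=1$). Fix $\omega\in\Omega$ and any nontrivial solution $u$ of $H_\omega u=Eu$; decomposing the initial vector $(u(0),u'(0\rlim))=U+V$ with $U\in u(\omega)$, $V\in v(\omega)$, the estimate
\[
\norm{\begin{pmatrix} u(t) \\ u'(t\rlim)\end{pmatrix}}\ge \bigl\lvert\norm{T_E(t,\omega)U}-\norm{T_E(t,\omega)V}\bigr\rvert
\]
together with the exponential splitting forces exponential growth of $u$ in at least one direction, hence no nontrivial solution is subexponentially bounded. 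Invoking the converse Sch'nol statement (\cite[Theorem 4.6]{BoutetdeMonvelLenzStollmann2009}), the spectral measures of $H_\omega$ cannot charge $\set{E}$ (in fact cannot charge any neighbourhood of $E$, since the uniformly hyperbolic set is open by Coppel's theorem \cite[Theorem 3.1]{Johnson1987}). Thus $E\notin\sigma(H_\omega)=\Sigma$. The main obstacle is making the Sch'nol direction precise in the measure-potential setting, but both directions are available in \cite{BoutetdeMonvelLenzStollmann2009}, so it suffices to check that solutions in our sense satisfy the hypotheses there, which is exactly the content of Remark~\ref{rem:solutions}.
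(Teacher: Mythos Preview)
Your proposal is correct and follows essentially the same route as the paper: disjointness via Lemma~\ref{lem:trivial_lyapunov_filtration}, the inclusion $\mathcal{Z}\cup\mathcal{NU}\subseteq\Sigma$ via Lemma~\ref{lem:resolvent_implies_positive_and_uniform}, and the reverse inclusion by combining the exponential splitting of Theorem~\ref{thm:equivalence_uniformity} with the converse Sch'nol theorem, using Coppel's theorem to pass from a single energy to an open neighbourhood. The only organisational difference is that the paper handles the entire ``$\supseteq$'' direction in one stroke by the contrapositive of Lemma~\ref{lem:resolvent_implies_positive_and_uniform} (which already gives both $\gamma(E)>0$ and uniformity for $E\notin\Sigma$), whereas you treat $E\in\mathcal{Z}$ separately via a direct Sch'nol argument; this is redundant but harmless, since it just reproduces part of the proof of Lemma~\ref{lem:uniform_implies_zeroset_cont}.
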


\begin{proof}
	By Lemma \ref{lem:trivial_lyapunov_filtration} the union is disjoint.

	``$\supseteq$'': This is a direct consequence of Lemma
\ref{lem:resolvent_implies_positive_and_uniform}.

	``$\subseteq$'': Let $E\in\R$ with $\gamma(E)>0$ and $T_E$ uniform, $\delta>0$.
	It is easy to see that as soon as $\abs{E-E'}$ is small enough, we have
\[D:=\sup_{-1\leq t\leq 1} \sup_{\omega\in\Omega} \norm{T_E(t,\omega)-T_{E'}(t,\omega)}<\delta.\]
By Coppel's Theorem \cite[Theorem 3.1]{Johnson1987}, $T_{E'}$ is uniformly hyperbolic for all $E'$ in a small open interval $I$ containing $E$.
Now we can repeat the proof of Theorem \ref{lem:uniform_implies_zeroset_cont}
replacing $\complement \mathcal{Z}$ by $I$ to obtain $E\notin \Sigma$.
\end{proof}

\section{Characterization of the absolutely continuous spectrum}
\label{sec:CharacSpec}

For $\mu\in\M$ and $z\in\C^+$ let
\[m_\pm(z,\mu):= \pm \frac{u_\pm'(0\rlim)}{u_\pm(0)},\]
where $u_\pm$ are the unique solutions of $H_\mu u = zu$ being $L_2$ at $\pm\infty$, see Remark \ref{rem:lpc}.
The function $m_\pm(\cdot,\mu)$ are called $m$-functions.

\begin{lemma}[{\cite[Lemma 1]{Remling2007}}]
\label{lem:conv_m-funktion}
  Let $\Omega$ be atomless, $(\omega_n)$ in $\Omega$, $\omega\in\Omega$, $\omega_n\to\omega$ vaguely. Then
  \[m_\pm(\cdot,\omega_n)\to m_\pm(\cdot,\omega)\]
  uniformly on compact subsets of $\C^+$.
\end{lemma}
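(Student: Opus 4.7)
The plan is to apply a normal-families argument based on Vitali's theorem. The map $z\mapsto m_\pm(z,\mu)$ is Herglotz on $\C^+$ for each fixed $\mu$, and the family $\set{m_\pm(\cdot,\mu)\from \mu \in \Omega}$ is locally uniformly bounded on $\C^+$: this follows from a priori bounds on the Weyl solutions in the measure setting, obtained via the Sobolev-type estimate of Section~\ref{sec:Schr} combined with the cocycle property of the transfer matrices and the uniform bound $\norm{\omega}_\lu\leq C$ on $\Omega$. By Vitali's theorem on normal families it then suffices to establish pointwise convergence $m_\pm(z, \omega_n)\to m_\pm(z, \omega)$ for every $z\in \C^+$.

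For pointwise convergence I would use the classical Weyl disk construction. Fix $z\in\C^+$. For $R>0$ and $\omega\in\Omega$ let $D_R(z,\omega)\subseteq\overline{\C^+}$ denote the Weyl disk at $R$, i.e.\ the set of values $-u'(0\rlim)/u(0)$ as $u$ ranges over non-trivial solutions of $H_\omega u = zu$ satisfying some Dirichlet-type boundary condition at $R$. Its center and radius are explicit rational expressions in the entries of the transfer matrix $T_z(R,\omega)$. Since $\Omega$ is atomless, $T_z(R,\cdot)$ is continuous on $\Omega$, and hence $D_R(z,\omega_n)\to D_R(z,\omega)$ in Hausdorff distance for each fixed $R$. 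By the limit-point property (Remark~\ref{rem:lpc}) the nested disks satisfy $\bigcap_{R>0} D_R(z,\omega) = \set{m_+(z,\omega)}$, while $m_+(z,\omega')\in D_R(z,\omega')$ for every $R>0$ and every $\omega'\in\Omega$.

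Given $\eps>0$, I would first use the shrinking $\diam D_R(z,\omega)\to 0$ to pick $R$ so large that $\diam D_R(z,\omega) < \eps/2$, and then use the Hausdorff convergence at this fixed $R$ to pick $N$ such that $D_R(z,\omega_n)$ lies within Hausdorff distance $\eps/2$ of $D_R(z,\omega)$ for all $n\geq N$. Since $m_+(z,\omega_n)\in D_R(z,\omega_n)$ and $m_+(z,\omega)\in D_R(z,\omega)$, this yields $\abs{m_+(z,\omega_n) - m_+(z,\omega)} < \eps$; the argument for $m_-$ is identical after replacing $(0,R)$ by $(-R,0)$. The step I expect to be most delicate is the first one, namely the verification of the Herglotz property, the local uniform bound on $\set{m_\pm(\cdot,\mu)\from\mu\in\Omega}$, and the explicit Weyl disk formulas, all in the measure-valued setting; once these facts are extracted from the Sturm--Liouville machinery of \cite{BenAmorRemling2005} already invoked in Remark~\ref{rem:lpc}, the remainder is the standard argument from \cite{Remling2007}, and Vitali upgrades pointwise convergence to uniform convergence on compact subsets of $\C^+$.
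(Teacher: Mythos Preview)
The paper does not give its own proof here; the lemma is simply quoted from \cite[Lemma 1]{Remling2007}. Your sketch is a correct reconstruction of the standard argument, adapted to the measure-valued setting, and the key point---that the atomless hypothesis is exactly what makes $\omega\mapsto T_z(R,\omega)$ continuous, and hence makes the Weyl discs move continuously with $\omega$---is identified correctly.

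One simplification: the separate verification of local uniform boundedness is unnecessary and, as you have set it up, potentially circular (the lemma immediately following this one in the paper derives precisely such bounds \emph{from} the present result). Any family of Herglotz functions is automatically normal, since composition with a M\"obius map sending $\C^+$ to the unit disc yields a uniformly bounded family, so Vitali needs only the Herglotz property together with your pointwise Weyl-disc argument. With that adjustment the proposal is complete, modulo porting the Weyl-disc formulas to the measure setting via \cite{BenAmorRemling2005} or \cite{EckhardtTeschl2011}, as you already flag.
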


\begin{lemma}
  Let $\Omega$ be atomess and $K\subseteq \C^+$ compact. Then there exist $C_1,C_2>0$ such that for all $z\in K$ and $\omega\in\Omega$ we have
  \[C_1\leq \Im m_\pm(z,\omega) \leq \abs{m_\pm(z,\omega)} \leq C_2.\]
\end{lemma}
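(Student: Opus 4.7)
The plan is to run a straightforward compactness argument once joint continuity of $(\omega,z)\mapsto m_\pm(z,\omega)$ is in hand. First, I would upgrade Lemma~\ref{lem:conv_m-funktion} from separate continuity plus locally uniform convergence in $z$ to joint continuity on $\Omega\times\C^+$. Given $(\omega_n,z_n)\to(\omega,z)$ in $\Omega\times\C^+$, choose any compact $K'\subseteq\C^+$ containing $z$ and all but finitely many $z_n$. Lemma~\ref{lem:conv_m-funktion} gives $m_\pm(\cdot,\omega_n)\to m_\pm(\cdot,\omega)$ uniformly on $K'$, while $m_\pm(\cdot,\omega)$ is continuous (in fact holomorphic) on $\C^+$; combining these yields $m_\pm(z_n,\omega_n)\to m_\pm(z,\omega)$, i.e., joint continuity.

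The middle inequality $\Im m_\pm(z,\omega)\le\abs{m_\pm(z,\omega)}$ is just $\abs{\Im w}\le\abs{w}$ for complex $w$. For the upper bound, note that $\Omega$ is compact (it is a $\norm{\cdot}_\lu$-bounded vaguely closed subset of $\M$, cf.\ Section~\ref{sec:Spec}), hence $\Omega\times K$ is compact. The continuous function $(\omega,z)\mapsto\abs{m_\pm(z,\omega)}$ therefore attains its supremum on $\Omega\times K$, giving $C_2$.

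For the lower bound, I would invoke the Herglotz/Nevanlinna property of the Weyl--Titchmarsh $m$-functions: for each $\omega\in\Omega$ and $z\in\C^+$ one has $\Im m_\pm(z,\omega)>0$, which is a standard consequence of the limit point case (Remark~\ref{rem:lpc}) and the fact that $u_\pm(\cdot)$ cannot vanish at $0$ (else it would also vanish at $0$ as an element of $D(H_\omega)$ and contradict $L_2$-ness together with $H_\omega u=zu$ for nonreal $z$). Since $(\omega,z)\mapsto\Im m_\pm(z,\omega)$ is continuous and strictly positive on the compact set $\Omega\times K$, it attains a strictly positive minimum $C_1>0$. The only step that requires any care is the promotion of Lemma~\ref{lem:conv_m-funktion} to joint continuity, but this is immediate from the locally uniform formulation of the convergence; everything else is a verbatim compactness argument.
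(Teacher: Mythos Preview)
Your proposal is correct and follows essentially the same route as the paper: establish joint continuity of $(\omega,z)\mapsto m_\pm(z,\omega)$ from Lemma~\ref{lem:conv_m-funktion}, then use compactness of $\Omega\times K$ to get both bounds. The only difference is in how strict positivity of $\Im m_\pm$ is justified: you invoke the Herglotz property as a standard fact, while the paper makes this concrete via the identity $\Im m_+(z,\omega)/\Im z = \norm{u_+}_{L_2(0,\infty)}^2$ (citing \cite{EckhardtTeschl2011}) and argues that vanishing of the left-hand side would force $u_+\equiv 0$.
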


\begin{proof}
  By Lemma \ref{lem:conv_m-funktion} the functions $m_\pm\from K\times\Omega\to\C$ are continuous.
   Since $K\times\Omega$ is compact there exists $C_2\geq 0$ such that
   \[\abs{m_\pm(z,\omega)} \leq C_2\quad(z\in K,\omega\in\Omega).\]
   We show
   \[\min_{z\in K,\omega\in\Omega} \Im m_+(z,\omega)>0\]
   (note that the minimum exists by continuity of $\Im m_+$ and compactness of $K\times\Omega$).
   Indeed, if the minimum was zero there would be $z\in K$, $\omega\in\Omega$ such that $\Im m_+(z,\omega) = 0$.
   By \cite[Theorem 9.1 and Corollary 9.5]{EckhardtTeschl2011} we have
   \[\frac{\Im m_+(z,\omega)}{\Im z} = \norm{u_+}_{L_2(0,\infty)}^2\]
   yielding that $u_+$ is zero on $(0,\infty)$ and hence on $\R$, a contradiction.
\end{proof}

Given the previous two lemmas, we
can extend the Ishii-Pastur-Kotani Theorem to our setting. As the proof follows strictly the  lines of the original paper \cite{Kotani1982}, see also \cite[Section VII.3]{CarmonaLacroix1990},  we only state the result.

Recall that for $A\subseteq \R$ measurable the essential closure of $A$ is defined as
\[\overline{A}^{\ess} := \set{E\in \R;\; \forall\,\varepsilon>0: \lambda(A\cap
(E-\varepsilon,E+\varepsilon))>0},\]
where $\lambda$ denotes Lebesgue measure on $\R$.

\begin{theorem}
\label{thm:Ishii-Pastur-Kotani}
  Let $(\Omega,\alpha,\P)$ be ergodic and atomless. Then
  \[\Sigma_{\ac} = \overline{\set{E\in\R;\; \gamma(E) = 0}}^{\ess}.\]
\end{theorem}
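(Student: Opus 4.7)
The plan is to adapt Kotani's original proof from \cite{Kotani1982} to the present measure-potential setting; the argument splits naturally into the two inclusions.

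For the Ishii--Pastur direction $\Sigma_\ac \subseteq \overline{\set{E\in\R;\,\gamma(E) = 0}}^\ess$, I would invoke the Oseledec theorem for the $\SL(2,\R)$-cocycle $T_E$: for Lebesgue-a.e.\ $E$ in $\set{\gamma > 0}$ and $\P$-a.a.\ $\omega$, there is a one-dimensional subspace of initial data whose orbit under $T_E(\cdot,\omega)$ decays exponentially at $+\infty$, producing a subordinate solution there. The Gilbert--Pearson--Jitomirskaya--Last subordinacy theory, adapted to the Sturm--Liouville framework of \cite{BenAmorRemling2005} that the paper already relies on, then forces the a.c.\ spectral measure of $H_\omega$ to vanish on $\set{E;\,\gamma(E)>0}$ for $\P$-a.a.\ $\omega$. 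Together with the almost sure constancy of $\Sigma_\ac$ given by Proposition \ref{prop:ac_const} and the definition of the essential closure, this yields the inclusion.

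For the Kotani direction $\overline{\set{E\in\R;\,\gamma(E)=0}}^\ess \subseteq \Sigma_\ac$, I would follow Kotani's four-step strategy. First, extend $\gamma$ to $\C^+$ by averaging the $m$-functions: the map
\[
w(z) := \int_\Omega m_+(z,\omega)\,d\P(\omega) \qquad (z \in \C^+)
\]
is well-defined and Herglotz on $\C^+$ thanks to the uniform bounds and continuity of $m_\pm$ supplied by Lemma \ref{lem:conv_m-funktion} and the bound preceding this theorem, and $-\Re w(z)$ recovers (the harmonic extension of) $\gamma$ on $\C^+$. Second, harmonicity of $\gamma$ on $\C^+$, translation invariance of $\P$, and ergodicity identify the boundary values of the extension with the a.e.~defined Lyapunov exponent on $\R$. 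Third, and most crucially, establish Kotani's identity: on $\set{E;\,\gamma(E)=0}$ one has $m_+(E+\i 0,\omega) = -\overline{m_-(E+\i 0,\omega)}$ for $\P$-a.a.\ $\omega$, which forces $\Im m_+(E+\i 0,\omega)>0$ on a set of positive Lebesgue$\,\otimes\,\P$-measure. Fourth, conclude via the Herglotz representation, since the a.c.\ spectral density of $H_\omega$ is proportional to $\Im m_+(E + \i 0,\omega)$.

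The main obstacle is the third step. In Kotani's treatment it relies on Fatou's theorem for bounded analytic functions, subharmonicity of $z\mapsto\log\abs{m_+(z,\omega)+m_-(z,\omega)}$ in $\omega$, and a careful combination of ergodicity with invariance under $\alpha$. In our setting all the analytic inputs survive intact once one has the uniform two-sided bounds on $m_\pm$ on compacts of $\C^+$ proved just above, the limit point property and Green's identity recalled in Remark \ref{rem:lpc}, and the $\SL(2,\R)$-cocycle structure of $T_E$ established in Section \ref{sec:Transfer}. No genuinely new analytic idea is needed beyond \cite{Kotani1982} and the exposition in \cite[Section VII.3]{CarmonaLacroix1990}; the work is the purely technical verification that every step tolerates replacing a locally integrable potential by a translation-bounded signed measure.
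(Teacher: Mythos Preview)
Your proposal is correct and matches the paper's approach exactly: the paper does not give a proof at all but simply states that, given the two preparatory lemmas on continuity and uniform two-sided bounds for the $m$-functions, the argument ``follows strictly the lines of the original paper \cite{Kotani1982}, see also \cite[Section VII.3]{CarmonaLacroix1990}''. Your outline of the Ishii--Pastur and Kotani directions is a faithful expansion of precisely that reference, invoking the same lemmas the paper sets up for this purpose.
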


\section{Absence of absolutely continuous spectrum}
\label{sec:AbsacSpec}

In this section we show that a suitable finite local complexity condition on the potential (i.e., the measure) combined with aperiodicity yields absence of absolutely continuous spectrum.
First, we recall some definitions from \cite{KlassertLenzStollmann2011}.

\begin{definition}
  A \emph{piece} is a pair $(\nu,I)$ consisting of an interval $I\subseteq\R$ with positive length
  $\lambda(I) >0$ (which is then called the \emph{length} of the piece) and a local measure $\nu$ on $\R$ supported on $I$.
  We abbreviate pieces by $\nu^I$. Without restriction, we may assume that $\min I = 0$.
  A \emph{finite piece} is a piece of finite length.
  We say $\nu^I$ \emph{occurs} in a local measure $\mu$ at $x\in\R$, if $\1_{x+I}\mu$
  is a translate of $\nu$.

	The \emph{concatenation} $\nu^I=\nu_1^{I_1}\mid \nu_2^{I_2}\mid \ldots$ of a finite or countable family
    $(\nu_j^{I_j})_{j\in N}$, with $N=\set{1,2,\ldots,\abs{N}}$ (for $N$
finite) or $N=\N$ (for $N$ infinite), of finite pieces is defined by
	\begin{align*}
		I & = \left[\min I_1,\min I_1 + \sum_{j\in N} \lambda(I_j)\right],\\
		\nu & = \nu_1+\sum_{j\in N,\,j\geq 2} \nu_j\Big(\cdot-\Big(\min I_1 + \sum_{k=1}^{j-1} \lambda(I_k) - \min I_j\Big)\Big).
	\end{align*}
	We also say that $\nu^I$ is \emph{decomposed} by $(\nu_j^{I_j})_{j\in N}$.
\end{definition}

\begin{definition}
	Let $\mu$ be a local measure on $\R$. We say that $\mu$ has the \emph{finite decomposition property} (f.d.p.),
	if there exist a finite set $\mathcal{P}$ of finite pieces (called the \emph{local pieces})
	and $x_0\in\R$, such that $(\1_{[x_0,\infty)}\mu)^{[x_0,\infty)}$ is a translate of a
	concatenation $v_1^{I_1}\mid \nu_2^{I_2}\mid\ldots$ with $\nu_j^{I_j}\in\mathcal{P}$ for all $j\in\N$.

	A local measure $\mu$ has the \emph{simple finite decomposition property} (s.f.d.p.), if it has the f.d.p.~with a decomposition such that there is $\ell>0$ with the following property: Assume that the two pieces
	\[\nu_{-m}^{I_{-m}} \mid \ldots \mid \nu_{0}^{I_{0}} \mid \nu_{1}^{I_{1}} \mid \ldots \mid \nu_{m_1}^{I_{m_1}} \quad \text{and} \quad
	 \nu_{-m}^{I_{-m}} \mid \ldots \mid \nu_{0}^{I_{0}} \mid \mu_{1}^{J_{1}} \mid \ldots \mid \mu_{m_2}^{J_{m_2}}\]
	occur in the decomposition of $\mu$ with a common first part $\nu_{-m}^{I_{-m}} \mid \ldots \mid \nu_{0}^{I_{0}}$ of length at least $\ell$ and such that
	\[\1_{[0,\ell)}(\nu_{1}^{I_{1}} \mid \ldots \mid \nu_{m_1}^{I_{m_1}}) = \1_{[0,\ell)}(\mu_{1}^{J_{1}} \mid \ldots \mid \mu_{m_2}^{J_{m_2}}),\]
	where $\nu_j^{I_j}$, $\mu_k^{J_k}$ are pieces from the decomposition (in particular, all belong to $\mathcal{P}$ and start at $0$) and the latter two concatenations are of lengths at least $\ell$. Then
	\[\nu_1^{I_1} = \mu_1^{J_1}.\]
\end{definition}

\begin{theorem}
\label{thm:H_mu_abs_spectrum}
  Let $\mu\in\M$ such that $\mu$ and the reflected measure $\mu(-(\cdot))$ have the s.f.d.p.~and assume that neither
  $\mu$ nor $\mu(-(\cdot))$ are eventually periodic.
  Then $H_\mu$ does not have absolutely continuous spectrum.
\end{theorem}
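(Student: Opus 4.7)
The plan is to apply Theorem~\ref{thm:Ishii-Pastur-Kotani} to the orbit closure of $\mu$ and then derive a contradiction from Kotani's deterministic theorem using the s.f.d.p.\ and aperiodicity. First I would form the hull $\Omega := \overline{\set{\alpha_t(\mu);\; t\in\R}}$ in the vague topology. By the first proposition of Section~\ref{sec:Spec}, $\Omega$ is a $\norm{\cdot}_\lu$-bounded, vaguely compact, metrizable, translation-invariant subset of $\M$. The s.f.d.p., with its finite set $\mathcal{P}$ of local pieces, endows $\Omega$ with the structure of a suspension over a symbolic subshift on the alphabet $\mathcal{P}$. Standard arguments for linearly repetitive subshifts (analogous to those used for aperiodic Delone sets) then yield minimality (every finite sub-concatenation of the decomposition occurs in $\mu$ with bounded gaps) and unique ergodicity (uniform sub-pattern frequencies). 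Assuming the pieces in $\mathcal{P}$ are themselves atomless, as is natural in the target applications, we obtain a strictly ergodic, atomless $(\Omega,\alpha,\P)$.

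By minimality and the Last--Simon constancy result mentioned in the remark following the spectrum constancy theorem, $\sigma_\ac(H_\mu)$ coincides with the $\omega$-independent value $\Sigma_\ac$; by Theorem~\ref{thm:Ishii-Pastur-Kotani} this equals $\overline{\mathcal{Z}}^{\,\ess}$, where $\mathcal{Z} := \set{E\in\R;\; \gamma(E) = 0}$. The task reduces to showing $\lambda(\mathcal{Z}) = 0$. Assume for contradiction that $\lambda(\mathcal{Z}) > 0$. Kotani's deterministic theorem, which extends to our measure-valued setting via the $m$-function framework of Section~\ref{sec:CharacSpec} (in particular Lemma~\ref{lem:conv_m-funktion} and the bounds that follow), then yields reflectionlessness of the hull: any two $\omega_1,\omega_2\in\Omega$ with $\omega_1 = \omega_2$ on $(-\infty,0]$ must coincide, and symmetrically with $[0,\infty)$ in place of $(-\infty,0]$.

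The main step is then to produce two distinct hull elements with identical left tail, contradicting reflectionlessness. Since $\mu$ is not eventually periodic while its decomposition draws from the finite alphabet $\mathcal{P}$, a pigeonhole argument produces an increasing sequence $(x_n)$ in $\R$ at which arbitrarily long common initial segments of the decomposition of $\mu$ are followed by two different local pieces. Passing to vaguely convergent subsequences of the shifted translates $\alpha_{x_n}(\mu)$ one extracts $\omega_1,\omega_2 \in \Omega$ whose restrictions to $(-\infty,0]$ coincide but which differ on $(0,\infty)$. The uniqueness clause in the s.f.d.p.\ is crucial here to guarantee that branching at the level of decompositions produces genuinely different measures, rather than merely two decompositions of the same measure. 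The aperiodicity hypothesis on $\mu(-(\cdot))$ plays the analogous role under reversal, ensuring the argument is symmetric. I expect the hard part to be precisely this last construction: the combinatorial branching argument from the discrete setting (\cite{Lenz2002}, \cite{KlassertLenzStollmann2011}) must be adapted to the vague topology on $\M$, and the s.f.d.p.\ uniqueness clause invoked at the right level to rule out spurious coincidences between different decompositions of the same underlying measure.
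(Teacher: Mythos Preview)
Your proposal takes a fundamentally different route from the paper, and it has a genuine gap.

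The paper's proof is essentially one line: it invokes \cite[Theorem 4.1]{KlassertLenzStollmann2011}, which already establishes absence of absolutely continuous spectrum for the halfline operators $H_\mu|_{[0,\infty)}$ and $H_{\mu(-(\cdot))}|_{[0,\infty)}$ (with Dirichlet boundary condition at $0$) under exactly the hypotheses s.f.d.p.\ plus not eventually periodic. A unitary reflection identifies $H_{\mu(-(\cdot))}|_{[0,\infty)}$ with $H_\mu|_{(-\infty,0]}$, and since the full-line a.c.\ spectrum is contained in the union of the two halfline a.c.\ spectra, the result follows. No hull, no ergodic theory, no Kotani argument is needed at this stage; all of that machinery enters only later, in Theorem~\ref{thm:Sigma_ac_empty} and Section~\ref{sec:Cantor}, where additional dynamical hypotheses are explicitly assumed.

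The gap in your approach is the claim that the orbit closure $\Omega$ of $\mu$ is strictly ergodic. The s.f.d.p.\ is a purely local combinatorial condition on how pieces fit together; it does \emph{not} imply linear repetitivity, bounded gaps, or uniform pattern frequencies, and hence gives neither minimality nor unique ergodicity of the hull. A measure assembled from an arbitrary (say, non-minimal, non-uniquely-ergodic) sequence over a finite alphabet of pieces can perfectly well satisfy the s.f.d.p.\ while having a hull with none of these properties. Without strict ergodicity you cannot invoke Theorem~\ref{thm:Ishii-Pastur-Kotani} in the way you want, nor the Last--Simon constancy of $\sigma_\ac$ (which, incidentally, the paper only records for continuous-function potentials, not for measures). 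You also add an atomlessness assumption that is not part of the stated hypotheses. In short, you are trying to prove the theorem under strictly stronger hypotheses than those given, and even then the reduction to Kotani determinism would require substantial extra work; the paper sidesteps all of this by quoting the halfline result directly.
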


\begin{proof}
  By \cite[Theorem 4.1]{KlassertLenzStollmann2011}, the halfline operators $H_\mu|_{[0,\infty)}$ and $H_{\mu(-(\cdot))}|_{[0,\infty)}$ (with Dirichlet boundary conditions at $0$)
  do not have absolutely continuous spectrum. Let $U\from L_2((-\infty,0])\to L_2([0,\infty))$, $Uf(t):= f(-t)$. Then $U$ is unitary and
  \[U^*H_{\mu(-(\cdot))}|_{[0,\infty)}U = H_\mu|_{(-\infty,0]}.\]
  Hence, both operators $H_\mu|_{[0,\infty)}$ and $H_\mu|_{(-\infty,0]}$ do not have absolutely continuous spectrum.
  Therefore, also $H_\mu$ cannot have any absolutely continuous spectrum.
\end{proof}

We can now state the main theorem of this section. A similar result is stated in \cite[Theorem 5.1]{KlassertLenzStollmann2011}.

\begin{theorem}
\label{thm:Sigma_ac_empty}
  Let $(\Omega,\alpha,\P)$ be ergodic, minimal, aperiodic (i.e.\ there exists $\omega\in\Omega$ which is not periodic) and
  have the
    s.f.d.p.\ (i.e. for every $\omega\in\Omega$: $\omega$ and $\omega(-(\cdot))$ have s.f.d.p.).
  Then $\Sigma_{\ac} = \varnothing$.
\end{theorem}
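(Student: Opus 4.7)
The plan is to combine Proposition~\ref{prop:ac_const}, which under ergodicity gives $\P$-almost sure constancy of the absolutely continuous spectrum, with Theorem~\ref{thm:H_mu_abs_spectrum}, which ensures $\sigma_{\ac}(H_\mu)=\varnothing$ for a single measure $\mu$ provided $\mu$ and its reflection $\mu(-(\cdot))$ both have the s.f.d.p.\ and neither is eventually periodic. It therefore suffices to produce a single $\omega\in\Omega$ to which Theorem~\ref{thm:H_mu_abs_spectrum} applies; then $\sigma_{\ac}(H_\omega)=\varnothing$ and Proposition~\ref{prop:ac_const} forces $\Sigma_{\ac}=\varnothing$.

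The s.f.d.p.\ half of the hypothesis of Theorem~\ref{thm:H_mu_abs_spectrum} is free of charge, since it is assumed for every $\omega\in\Omega$ and for every $\omega(-(\cdot))$. The substantive step is to verify that aperiodicity combined with minimality excludes eventual periodicity for every element of $\Omega$, both to the right (for $\omega$) and to the left (for $\omega(-(\cdot))$).

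The key lemma I would establish is the following: if some $\omega\in\Omega$ satisfies $\1_{[x_0,\infty)}\omega=\1_{[x_0,\infty)}\omega(\cdot-p)$ for some $x_0\in\R$ and $p>0$, then $\Omega$ consists entirely of $p$-periodic measures. The argument runs as follows. For every $t\geq 0$ the translate $\alpha_t(\omega)$ is $p$-periodic on $[x_0-t,\infty)$, so the interval of periodicity exhausts $\R$ as $t\to\infty$. Since $\Omega$ is vaguely compact (hence sequentially compact by metrizability), I would extract a subsequence $t_n\to\infty$ with $\alpha_{t_n}(\omega)\to\omega^*\in\Omega$. Testing the periodicity relation against $f\in C_c(\R)$ and passing to the limit yields that $\omega^*$ is globally $p$-periodic, so $\mathcal{O}(\omega^*)$ is contained in the compact, $\alpha$-invariant set of $p$-translates of $\omega^*$. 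By minimality, $\Omega=\overline{\mathcal{O}(\omega^*)}$, so every element is $p$-periodic, contradicting aperiodicity. Running the same argument with $t_n\to-\infty$ rules out left eventual periodicity, that is, eventual periodicity of $\omega(-(\cdot))$.

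The main obstacle is precisely this passage from local eventual periodicity on a half-line to genuine global periodicity of some orbit limit. It hinges on the fact that vague compactness of $\Omega$ allows the periodic tail to be translated arbitrarily far into the bulk and captured as a global phenomenon in the limit; minimality then propagates this rigidity to the whole orbit closure and collapses $\Omega$. Once this structural dichotomy is in hand, Theorem~\ref{thm:H_mu_abs_spectrum} applies to every $\omega\in\Omega$, and the conclusion $\Sigma_{\ac}=\varnothing$ is immediate from Proposition~\ref{prop:ac_const}.
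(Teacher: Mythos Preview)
Your proposal is correct and follows essentially the same approach as the paper's proof: both hinge on the key lemma that eventual periodicity of a single element of $\Omega$, combined with minimality, forces every element of $\Omega$ to be periodic with the same period, contradicting aperiodicity. The paper frames this as a contradiction argument (assume the set $\set{\omega;\; \sigma_{\ac}(H_\omega)\neq\varnothing}$ has positive measure, then invoke Theorem~\ref{thm:H_mu_abs_spectrum} to find an eventually periodic element), while you show directly that no $\omega$ can be eventually periodic and then apply Theorem~\ref{thm:H_mu_abs_spectrum} to every $\omega$; these are logically equivalent packagings of the same idea, and your subsequence extraction via compactness is in fact slightly more careful than the paper's informal $\lim_{t\to\infty}\alpha_t(\omega)$.
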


\begin{proof}
  Assume that $\set{\omega\in\Omega;\; \sigma_{\ac}(H_\omega)\neq\varnothing}$ has positive $\P$-measure.
  By Theorem \ref{thm:H_mu_abs_spectrum} the set $\set{\omega\in\Omega;\; \text{$\omega$ or $\omega(-(\cdot))$ is
  eventually periodic}}$
  has positive $\P$-measure. W.l.o.g.~assume that $\omega$ is periodic for $t\geq t_0$ with period $p$.
  By closedness of $\Omega$,
  \[\tilde{\omega}:= \lim_{t\to\infty} \alpha_{t}(\omega) = \lim_{t\to\infty} \omega(\cdot+t)\in \Omega\]
  and $\tilde{\omega}$ is periodic with period $p$.
  For $\omega'\in\Omega$ there exists $(t_n)$ in $\R$ such that
  $\alpha_{t_n}(\tilde{\omega})\to \omega'$. Since $\tilde{\omega}$ is $p$-periodic and $\alpha$ is continuous, we arrive at
  \[\alpha_p(\omega') = \alpha_p\left(\lim_{n\to\infty} \alpha_{t_n}(\tilde{\omega})\right) = \lim_{n\to\infty} \alpha_{t_n}\alpha_p(\tilde{\omega}) = \omega'.\]
  So, every $\omega\in\Omega$ must be periodic with the same period, a contradiction.
\end{proof}

For applications it might be helpful to have some condition when a measure
actually has the s.f.d.p. The remaining part of this section will address this
issue.

\begin{proposition}[{\cite[Proposition 2]{KlassertLenzStollmann2011}}]
\label{prop:fdp_sfdp}
  Let $\mu\in\M$ have the f.d.p.\ with local pieces $\nu_1^{I_1},\ldots,\nu_N^{I_N}$. If $\mu$ does not have the s.f.d.p.~with respect to any set of local pieces,
  then there must be two local pieces among the $\nu_j^{I_j}$ which are multiples of Lebesgue measure, i.e. of the form $c\1_I \lambda$.
\end{proposition}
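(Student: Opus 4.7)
I would prove the proposition by contraposition: assuming that at most one of the pieces $\nu_j^{I_j}$ is a multiple of Lebesgue measure, I would construct a (possibly refined) alphabet of local pieces witnessing the s.f.d.p.\ for $\mu$.

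Starting from the given alphabet $\mathcal{P} = \{\nu_1^{I_1},\ldots,\nu_N^{I_N}\}$, suppose the s.f.d.p.\ fails with $\mathcal{P}$. Then for every $\ell > 2\max_j \lambda(I_j)$ there must be two sub-concatenations of $\mu$ sharing a common prefix of length $\geq \ell$ whose next pieces $\nu^I,\mu^J\in\mathcal{P}$ differ but whose continuations agree on $[0,\ell)$. Because $\ell$ exceeds both piece lengths, the agreement forces, after WLOG assuming $\lambda(I)<\lambda(J)$, that $\nu^I = \1_I\mu^J$: the shorter piece is an initial segment of the longer one. This suggests refining $\mathcal{P}$ by replacing $\mu^J$ with the two pieces $\nu^I$ and $\rho^K := \1_{(\lambda(I),\lambda(J)]}\mu^J$, thereby splitting every occurrence of $\mu^J$ in the decomposition of $\mu$ and resolving the particular witness. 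Iterating the refinement over all remaining witnesses produces a sequence of successively finer finite alphabets $\mathcal{P}_1,\mathcal{P}_2,\ldots$.

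The main obstacle, and the crux of the proof, is showing that this refinement process terminates. The key claim I would establish is that each new cut point introduced inside a piece must be aligned with an intrinsic \emph{marker} of that piece---an atom, a jump in its Radon--Nikodym density, or a boundary of its singular support. Indeed, if a cut position had no such marker, then in a neighborhood of the cut the piece would restrict, on each side, to a multiple of Lebesgue measure; tracing this back through the refinement history and through the overlapping concatenations producing the cut would force two distinct Lebesgue-multiple pieces to coexist in the current alphabet (and, in fact, in the original $\mathcal{P}$, since Lebesgue-multiple pieces survive refinement), contradicting the contrapositive hypothesis. Since each non-Lebesgue piece has only finitely many markers, only finitely many distinct cuts are possible, the refinement terminates at some finite alphabet $\mathcal{P}_\infty$, and for $\ell$ larger than the longest piece in $\mathcal{P}_\infty$ the s.f.d.p.\ holds by construction. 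Making this marker argument rigorous---separating the absolutely continuous, singular continuous, and pure point parts of each piece, and carefully tracking how cuts propagate through the iteration---is where essentially all the technical effort lies; it is precisely here that the hypothesis ``at most one Lebesgue-multiple piece'' is decisive, since two Lebesgue-multiple pieces of different lengths can be shifted past each other without any marker constraint, which is exactly the obstruction to termination in the excluded case.
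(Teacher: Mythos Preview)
The paper itself does not supply a proof of this proposition; it is quoted verbatim from \cite{KlassertLenzStollmann2011} and used as a black box. So there is nothing in the present paper to compare your argument against, and a full assessment would require consulting the cited source.

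That said, your strategy of arguing by contraposition and iteratively refining the alphabet is natural, and the observation that a failure of the s.f.d.p.\ with $\ell$ exceeding all piece lengths forces one piece to be an initial segment of another is correct and is indeed the starting point. The problem lies in your termination argument. You assert that ``each non-Lebesgue piece has only finitely many markers,'' where a marker is an atom, a jump of the Radon--Nikodym density, or a boundary point of the singular support. This is not true for general measures in $\M$: a piece may have an absolutely continuous part whose density is, say, continuous but nowhere locally constant, or may have infinitely many atoms accumulating at a point, so that the set of points where the piece fails to look locally like $c\lambda$ can be infinite (even uncountable). Your finiteness claim therefore needs a different justification, or the notion of marker needs to be replaced by something with a built-in finiteness property.

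There is a second, related gap: the implication ``cut at a non-marker point $\Rightarrow$ two Lebesgue-multiple pieces appear'' is asserted but not argued. If $\mu^J$ is locally $c\lambda$ near the cut position $p=\lambda(I)$, all you learn is that $\nu^I$ is $c\lambda$ near its right end and the next piece $\sigma^L$ is $c\lambda$ near its left end; neither conclusion makes $\nu^I$ or $\sigma^L$ a global Lebesgue multiple. The phrase ``tracing this back through the refinement history'' hides the actual mechanism, and it is not clear that any such tracing yields two full Lebesgue-multiple pieces rather than merely pieces that are locally Lebesgue near certain endpoints. To make the approach work you would need a sharper invariant that genuinely decreases at each refinement step, or an a priori bound on the set of cut positions that can ever arise; the current sketch does not provide either.
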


\begin{lemma}
\label{lem:sfdp_contruction}
Let $A$ be a finite set, $D:=\set{x_n;\;n\in\Z}\subseteq \R$ be of finite local complexity, i.e.\ $D':=\set{x_{n+1}-x_n;\;n\in\Z}$ is finite.
Let $f\from D\to A$ and for $a\in A$ let $\nu_a\in \M$ be supported in $[0,l_a]$. Assume that $l_{f(x_n)} \leq x_{n+1}-x_n$ for all $n\in\Z$. Define
\[\mu:= \sum_{x\in D} \delta_x*\nu_{f(x)} = \sum_{x\in D} \nu_{f(x)}(\cdot - x).\]
Then $\mu$ and $\mu(-(\cdot))$ have the f.d.p. If, additionally, at most one of the $\nu_a$ is a multiple of Lebesgue measure then $\mu$ and $\mu(-(\cdot))$ have the s.f.d.p.
\end{lemma}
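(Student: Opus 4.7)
The plan is to construct the decomposition explicitly and then invoke Proposition~\ref{prop:fdp_sfdp} for the stronger statement. For each pair $(a,d)\in A\times D'$ with $l_a\le d$, let $\pi_{a,d}$ be the piece on $[0,d]$ which agrees with $\nu_a$ on $[0,l_a]$ and vanishes on $(l_a,d]$. Since $A$ and $D'$ are finite, the collection $\mathcal{P}:=\set{\pi_{a,d};\; (a,d)\in A\times D',\; l_a\le d}$ is a finite set of finite local pieces. Enumerating the points of $D\cap[x_0,\infty)$ in increasing order as $x_0<x_1<x_2<\cdots$ and writing $d_n:=x_{n+1}-x_n\in D'$, the support condition $l_{f(x_n)}\le d_n$ ensures that the summand $\nu_{f(x_n)}(\cdot-x_n)$ of $\mu$ lives inside the gap $[x_n,x_{n+1}]$. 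Consequently, the translate by $-x_0$ of $\1_{[x_0,\infty)}\mu$ is exactly the concatenation $\pi_{f(x_0),d_0}\mid\pi_{f(x_1),d_1}\mid\cdots$, which gives the f.d.p.\ for $\mu$. Applying the same recipe to the reflected data (reflect each $\nu_a$ through $0$ and reverse the ordering of $D$) yields the f.d.p.\ for $\mu(-(\cdot))$.

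For the s.f.d.p.\ statement I would appeal to Proposition~\ref{prop:fdp_sfdp}: it is enough to verify that at most one element of $\mathcal{P}$ is a multiple of Lebesgue measure. A piece $\pi_{a,d}$ equals $c\,\1_{[0,d]}\lambda$ only if $\nu_a$ itself is a multiple of Lebesgue measure on $[0,l_a]$ and, in addition, the portion $(l_a,d]$ is compatible with constant density on the whole of $[0,d]$, which forces either $l_a=d$ (so $\pi_{a,d}=\nu_a$) or $c=0$ (so $\nu_a\equiv 0$). Under the hypothesis that at most one $\nu_a$, call it $\nu_{a_0}$, is a Lebesgue multiple—tacitly a \emph{nonzero} one, the degenerate case $\nu_{a_0}\equiv 0$ being removable by deleting $a_0$ from $A$ and absorbing the corresponding gaps into the adjacent intervals—the only Lebesgue-multiple element of $\mathcal{P}$ is $\pi_{a_0,l_{a_0}}$, and only if $l_{a_0}\in D'$. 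Proposition~\ref{prop:fdp_sfdp} then forces the s.f.d.p.\ for $\mu$, and the analogous reasoning with reflected pieces gives s.f.d.p.\ for $\mu(-(\cdot))$.

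The main technical subtlety is the bookkeeping of the trailing empty portion $(l_a,d]$ inside a single piece: pieces must be indexed by the pair $(a,d)$ rather than by $a$ alone, and keeping this index set finite is exactly where the finite local complexity of $D$ enters. The only other delicate point is the degeneracy $\nu_a\equiv 0$, which if left unchecked produces an entire family of zero pieces (all multiples of Lebesgue measure) and breaks the hypothesis of Proposition~\ref{prop:fdp_sfdp}; the standard workaround of merging such $a$ into neighboring gaps is routine.
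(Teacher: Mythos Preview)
Your approach coincides with the paper's: exhibit a finite set of local pieces giving the f.d.p.\ and then invoke Proposition~\ref{prop:fdp_sfdp} for the s.f.d.p. The paper's own proof is a two-sentence sketch using only the pieces $\nu_a^{[0,l_a]}$ indexed by $a\in A$; your indexing by pairs $(a,d)\in A\times D'$ is the right refinement when the inequality $l_{f(x_n)}\le x_{n+1}-x_n$ is strict, and the finite local complexity of $D$ is exactly what keeps $\mathcal{P}$ finite. Your analysis of which $\pi_{a,d}$ can be a Lebesgue multiple is correct.

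One caveat: the workaround you sketch for the degenerate case $\nu_{a_0}\equiv 0$ is not as routine as claimed. Deleting $a_0$ and absorbing its gaps into adjacent intervals produces new gap lengths that are sums over consecutive runs of $a_0$; if such runs have unbounded length in $(f(x_n))_{n}$, the resulting piece set becomes infinite and the f.d.p.\ is lost. The paper's brief argument does not address this corner case either. In the only place the lemma is actually invoked one has $l_{f(x_n)}=x_{n+1}-x_n$, so that $\pi_{a,d}=\nu_a^{[0,l_a]}$ and the piece set reduces to $\set{\nu_a^{[0,l_a]};\;a\in A}$, for which the hypothesis ``at most one $\nu_a$ is a multiple of Lebesgue measure'' feeds directly into Proposition~\ref{prop:fdp_sfdp} with no degeneracy to handle.
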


\begin{proof}
  By construction the whole measure $\mu$ can be decomposed into the pieces $\set{\mu_a^{[0,l_a]};\; a\in A}$; therefore $\mu$ has the f.d.p. The second assertion is a direct consequence of Proposition \ref{prop:fdp_sfdp}.
\end{proof}

\section{Cantor Spectra of zero measure}
\label{sec:Cantor}

We now prove Cantor spectra for a large class of operators in case of atomless $\Omega$.
We call $C\subseteq \R$ a \emph{Cantor set} if $C$ is closed, nowhere dense (i.e.\ $C$ does not contain any interval of positive length) and does not contain any isolated points.

\begin{theorem}
\label{thm:cantor1}
  Let $(\Omega,\alpha,\P)$ be strictly ergodic, atomless, aperiodic and have the s.f.d.p.
  Furthermore, let $T_E$ be uniform for all $E\in\R$. Then
  \[\Sigma = \set{E\in\R;\; \gamma(E) = 0}\]
  and $\Sigma$ is a Cantor set of zero Lebesgue measure.
\end{theorem}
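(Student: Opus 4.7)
The plan is to assemble this result from the various pieces already established in the paper; essentially no new argument is required, only careful bookkeeping of which hypotheses trigger which earlier statements.

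\textbf{Step 1: The identification $\Sigma = \{E : \gamma(E) = 0\}$.} This is immediate from Lemma \ref{lem:uniform_implies_zeroset_cont}, whose hypotheses (strict ergodicity, atomless, and $T_E$ uniform for every $E \in \R$) are precisely what we have assumed. Alternatively, one may cite Theorem \ref{thm:char_spectrum}: the second set in the disjoint union there, $\{E : T_E \text{ is not uniform}\}$, is empty by hypothesis.

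\textbf{Step 2: Zero Lebesgue measure.} Here I would combine the Kotani-type Theorem \ref{thm:Ishii-Pastur-Kotani} with the absence-of-$\Sigma_{\ac}$ result. The hypotheses of Theorem \ref{thm:Sigma_ac_empty} (ergodicity, minimality built into strict ergodicity, aperiodicity, and s.f.d.p.) are all satisfied, so $\Sigma_{\ac} = \varnothing$. Then Theorem \ref{thm:Ishii-Pastur-Kotani} gives
\[\overline{\set{E \in \R;\; \gamma(E) = 0}}^{\ess} = \Sigma_{\ac} = \varnothing,\]
which by the definition of the essential closure means $\set{E \in \R;\; \gamma(E) = 0}$ has Lebesgue measure zero. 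Combined with Step 1, $\Sigma$ has Lebesgue measure zero.

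\textbf{Step 3: Cantor property.} Three things must be checked: $\Sigma$ is closed, nowhere dense, and has no isolated points. Closedness is automatic since $\Sigma$ is the spectrum of a self-adjoint operator. Nowhere density is a free consequence of Step 2: a closed set of Lebesgue measure zero cannot contain any nondegenerate interval, so $\Sigma$ has empty interior. Absence of isolated points is exactly the content of (the second assertion of) Lemma \ref{lem:disc_spectrum}, whose hypothesis (ergodicity of $(\Omega,\alpha,\P)$) is part of strict ergodicity.

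\textbf{Anticipated obstacles.} There is no single hard step; the main care required is to check that each invoked earlier result has exactly the hypotheses we are handed. The only mild subtlety is that Theorem \ref{thm:Ishii-Pastur-Kotani} needs only ergodicity and atomlessness (both present), while Theorem \ref{thm:Sigma_ac_empty} needs the full s.f.d.p.\ and aperiodicity, and these are precisely the two extra assumptions of the present theorem beyond strict ergodicity, atomlessness and uniformity of $T_E$. Once this is checked, the proof is a short concatenation of citations.
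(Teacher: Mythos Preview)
Your proposal is correct and follows essentially the same route as the paper's own proof: invoke Theorem \ref{thm:char_spectrum} (or equivalently Lemma \ref{lem:uniform_implies_zeroset_cont}) for the identification $\Sigma = \{E:\gamma(E)=0\}$, combine Theorems \ref{thm:Ishii-Pastur-Kotani} and \ref{thm:Sigma_ac_empty} to get zero Lebesgue measure, and use Lemma \ref{lem:disc_spectrum} for the absence of isolated points. The only difference is cosmetic: you make the closedness and nowhere-density checks explicit, whereas the paper leaves them implicit.
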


\begin{proof}
  By Theorem \ref{thm:char_spectrum} we observe $\Sigma = \set{E\in\R;\; \gamma(E) = 0}$. By Theorem \ref{thm:Ishii-Pastur-Kotani} we have
  \[\overline{\set{E\in\R;\; \gamma(E) = 0}}^{\ess} = \Sigma_{\ac}.\]
  Since $\Sigma_{\ac} = \varnothing$ by Theorem \ref{thm:Sigma_ac_empty} we
infer $\lambda(\set{E\in\R;\; \gamma(E) = 0}) = 0$.
  Note that $\Sigma$ does not contain any isolated
points by Lemma \ref{lem:disc_spectrum}.
%
%
%
  Since $\lambda(\Sigma) = 0$ we conclude that $\Sigma$ is nowhere dense, i.e., Cantor set of zero Lebesgue measure.
\end{proof}

It remains to establish some criterion for uniformity of the transfer matrices.

Let $A$ be a finite set equipped with the discrete topology.
A pair $(X,S)$ is a \emph{subshift} over $A$ if $X$ is a closed subset of
$A^\Z$,
where $A^\Z$ is endowed with the product topology, and $X$ is invariant under
the shift $S\from A^\Z\to A^\Z$, $S a(n) := a(n+1)$.

For $a\in A$ let $\nu_a\in \M$ be atomless and supported on $[0,l_a]$.

For $x\in X$ we define the measure $\omega_x\in \M$ by
\[\omega_x:= \sum_{n\in\N_0} \delta_{\sum_{k=0}^{n-1} l_{x(k)}} * \nu_{x(n)} + \sum_{n\in\N} \delta_{\sum_{k=-n}^{-1} -l_{x(k)}} * \nu_{x(-n)}.\]
Let
\[\Omega:= \set{\alpha_{t}(\omega_x);\; x\in X,\, t\in\R}.\]

Many properties of $(X,S)$ transfer to $(\Omega,\alpha)$, as the following
proposition shows.

\begin{proposition}[{\cite[Proposition 4]{KlassertLenzStollmann2011}}]
\label{prop:subshift}
  \begin{enumerate}
   \item
      Assume that at most one of the measures $\nu_a$ is a multiple of Lebesgue measure.
      Then $(\Omega,\alpha)$ has the s.f.d.p.
   \item
      Any invariant probability measure $\P_X$ on $(X,S)$ induces a canonical
invariant probability measure
      $\P$ on $(\Omega,\alpha)$.
      If $\P_X$ is ergodic, then $\P$ is ergodic.
    \item
      If $(X,S)$ is uniquely ergodic, then $(\Omega,\alpha)$ is uniquely
ergodic.
    \item
      If $(X,S)$ is minimal, then $(\Omega,\alpha)$ is minimal.
  \end{enumerate}
\end{proposition}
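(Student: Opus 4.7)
The plan is to realize $(\Omega,\alpha)$ as a suspension (special flow) over the base $(X,S)$ with positive roof function $r\from X\to(0,\infty)$, $r(x):=l_{x(0)}$. Concretely, the map $\Phi\from X\times\R\to\Omega$, $\Phi(x,t):=\alpha_t(\omega_x)$, is surjective by the very definition of $\Omega$, and restricted to the fundamental domain $F:=\set{(x,t);\;0\leq t<r(x)}$ it becomes a bijection that intertwines the suspension flow on $F$ with the translation action $\alpha$. The four claims then become standard transfer statements for suspensions, plus one direct application of Lemma~\ref{lem:sfdp_contruction}.

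For (a), I apply Lemma~\ref{lem:sfdp_contruction} to each $\omega_x$: the underlying discrete set $D_x := \set{\sum_{k=0}^{n-1} l_{x(k)};\;n\geq 0}\cup\set{-\sum_{k=1}^{n}l_{x(-k)};\;n\geq 1}$ has gap set contained in the finite set $\set{l_a;\;a\in A}$, hence finite local complexity, and the colouring is $f(x_n)=x(n)$. The hypothesis on the $\nu_a$ is exactly the hypothesis of the lemma, so both $\omega_x$ and its reflection have s.f.d.p. Since s.f.d.p.\ is defined via a finite set of local pieces and is manifestly invariant under translation, the property descends to every $\alpha_t(\omega_x)$ and thus to all of $\Omega$.

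For (b) and (c), I define $\P:=\Phi_*\bigl(\tfrac{1}{\int r\,d\P_X}(\P_X\otimes\lambda)|_F\bigr)$. This is the classical Abramov suspension measure; a direct computation shows it is $\alpha$-invariant, and when $\P_X$ is $S$-ergodic the suspension is $\alpha$-ergodic by the standard argument (any $\alpha$-invariant set lifts to an $S\times\mathrm{id}$-invariant subset of $F$). For unique ergodicity, one checks that any $\alpha$-invariant probability measure on $\Omega$ induces, via the cross-section $X\cong\Phi(X\times\set{0})$ and the first-return map, an $S$-invariant probability measure on $X$, and that this correspondence is a bijection (up to normalization) onto $\alpha$-invariant measures on $\Omega$. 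Uniqueness on the base then forces uniqueness on $\Omega$. For (d), given $\omega=\alpha_s(\omega_x)$ and a target $\omega'=\alpha_t(\omega_y)$, minimality of $(X,S)$ supplies $n_k\in\Z$ with $S^{n_k}x\to y$ in $A^\Z$; setting $t_k:=\sum_{j=0}^{n_k-1}l_{x(j)}$ one has $\alpha_{t_k}(\omega_x)=\omega_{S^{n_k}x}$, and vague continuity of $\Phi$ yields $\alpha_{t_k+t-s}(\omega)\to\omega'$, so $\omega'\in\overline{\mathcal{O}(\omega)}$.

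The main obstacle will be establishing the vague continuity and bijectivity of $\Phi$ on $F$ in enough detail to justify the transfer arguments in (b) and (c); this is where the atomless assumption on the $\nu_a$ is essential, since otherwise the roof boundaries $t=r(x)$ would produce jump discontinuities (Dirac contributions at gluing points) that would invalidate the pushforward construction and the cross-section argument. Once this dictionary between suspensions and $(\Omega,\alpha)$ is in place, the ergodic-theoretic and dynamical conclusions follow from classical results in a routine manner, and (a) is independent of the suspension machinery altogether.
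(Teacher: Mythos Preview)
Your approach is correct and is essentially a fleshed-out version of what the paper does. For (a) both you and the paper invoke Lemma~\ref{lem:sfdp_contruction} directly. For (b)--(d) the paper argues in two lines via a factor relation and a citation to \cite{BaakeLenz2005}, whereas you spell out the suspension (special flow) construction explicitly with roof function $r(x)=l_{x(0)}$ and the Abramov measure. These are two presentations of the same mechanism: the flow $(\Omega,\alpha)$ is a factor of the suspension of $(X,S)$, and ergodicity, unique ergodicity, and minimality all pass to factors and from a base to its suspension.

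One small overclaim: you assert that $\Phi|_F$ is a \emph{bijection} onto $\Omega$. This need not hold in general---if two distinct letters $a\neq b$ satisfy $\nu_a=\nu_b$ and $l_a=l_b$, then different sequences $x$ can produce the same measure $\omega_x$ (compare the remark at the very end of Section~\ref{sec:Cantor}). Fortunately this does not matter for your conclusions: surjectivity of $\Phi$ already makes $(\Omega,\alpha)$ a \emph{factor} of the suspension flow, and all three properties in (b)--(d) descend along factor maps. So your argument for (d) goes through verbatim, and for (b) and (c) you should replace the bijection-and-cross-section reasoning by the simpler observation that the pushforward of the Abramov measure under $\Phi$ is $\alpha$-invariant (giving existence in (b)), and that ergodicity, unique ergodicity, and minimality are inherited by factors.
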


\begin{proof}
  Part (a) is just Lemma \ref{lem:sfdp_contruction}. For parts (b) to (d) note that $(\Omega,\alpha^\Z)$ is a factor of $(X,S)$ where $\alpha^\Z$ is the restriction of $\alpha$ to $\Z\times\Omega$.
  Thus, all the assertions hold true for $(\Omega,\alpha^\Z)$ instead of $(\Omega,\alpha)$, cf.\ \cite{BaakeLenz2005}. This, however, implies the assertions also for $(\Omega,\alpha)$.
\end{proof}

The crucial notion for uniformity will be Boshernitzan's condition introduced
in \cite{Boshernitzan1985}.

\begin{definition}
  Let $(X,S)$ be a subshift over a finite alphabet $A$. Then $(X,S)$ satisfies
condition (B) if there exists an ergodic probability measure $\P_X$ on $X$
  with
  \[\limsup_{n\to\infty} n\eta_{\P_X}(n) >0,\]
  where $\eta_{\P_X}(n):=\min\set{\P_X(V_w);\; w\in\mathcal{W},\, \abs{w} = n}$, $\mathcal{W}$ is the set of finite words and $V_w := \set{x\in X;\; x(1)\cdots x(\abs{w}) = w}$ is the cylinder set to $w$.
\end{definition}
Condition (B) has been introduced by Boshernitzan as a sufficient condition for unique ergodicity. It was then shown to imply a strong form of subadditive ergodic theorem in \cite{DamanikLenz2006}. This was used there in connection with the method of \cite{Lenz2002} to  establish Cantor spectrum of Lebesgue measure zero for aperiodic subshift satisfiying (B). Our use of this condition below is in exactly the same spirit. The  condition can be seen to hold
for a large number of subshifts, see e.g.~\cite{DamanikLenz06b}.

\begin{theorem}
\label{thm:Cantor2}
  Let $(X,S)$ be a minimal subshift over $A$ satisfying (B) and $\nu_a\in\M$
atomless for all $a\in A$. Then $T_E$ is uniform for all $E\in\R$.
\end{theorem}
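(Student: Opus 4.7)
The plan is to apply the semi-uniform subadditive ergodic theorem proved in the appendix to the cocycle $T_E$. Fix $E\in\R$ and define $F_t\from \Omega\to [0,\infty)$ by $F_t(\omega):=\log\norm{T_E(t,\omega)}$ for $t>0$. Since every $\nu_a$ is atomless, so is every $\omega\in\Omega$; as noted in Section~\ref{sec:Transfer}, this makes $T_E(t,\cdot)$ continuous on $\Omega$, and hence $F_t$ is continuous. The cocycle identity $T_E(s+t,\omega) = T_E(t,\alpha_s\omega)\,T_E(s,\omega)$ yields the subadditive inequality
$$F_{s+t}(\omega) \;\leq\; F_s(\omega) + F_t(\alpha_s\omega) \qquad (s,t\geq 0,\ \omega\in\Omega),$$
while $\det T_E(t,\omega)=1$ forces $F_t\ge 0$. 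Standard a priori estimates on the transfer matrix entries (depending only on $E$ and $\norm{\omega}_\lu$) give $\sup_{\omega\in\Omega,\,0\le t\le 1}F_t(\omega)<\infty$, so $(F_t)_{t\geq 0}$ is a continuous, nonnegative, subadditive process with bounded increments.

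Next, condition (B) is Boshernitzan's original sufficient criterion for unique ergodicity of $(X,S)$, so combining it with minimality of $(X,S)$ we obtain, by Proposition~\ref{prop:subshift}(c)--(d), that $(\Omega,\alpha,\P)$ is strictly ergodic. The semi-uniform subadditive ergodic theorem of the appendix is the continuous-time analogue of the Damanik--Lenz theorem from \cite{DamanikLenz2006}: on a strictly ergodic continuous-time system satisfying an appropriate continuous version of condition~(B), every continuous subadditive process $F_t$ obeys
$$\frac{F_t(\omega)}{t}\;\longrightarrow\;\inf_{t>0}\frac{1}{t}\int_\Omega F_t\, d\P \;=\; \gamma(E)$$
uniformly in $\omega\in\Omega$ as $t\to\infty$. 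Together with the preceding setup, this is exactly the statement that $T_E$ is uniform at energy $E$, and since the argument applies for every $E\in\R$ the theorem follows.

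The main obstacle is verifying that Boshernitzan's combinatorial condition on the discrete subshift $(X,S)$ delivers the hypothesis required by the semi-uniform ergodic theorem on the continuous-time system $(\Omega,\alpha)$. The dynamical system $(\Omega,\alpha)$ is, up to a coding, a suspension flow over $(X,S)$ with roof function $a\mapsto l_a$, and under this suspension a discrete cylinder $V_w\subseteq X$ corresponds to the set of $\omega\in\Omega$ exhibiting the pattern $w$ starting at position $0$. The quantitative lower bound $\limsup_{n\to\infty} n\,\eta_{\P_X}(n)>0$ on cylinder measures in $X$ therefore transfers, after rescaling by the mean of the roof, to a uniform lower bound on the $\P$-measure of the corresponding thin pattern tubes in $\Omega$, which is precisely the form of quantitative recurrence needed to drive the appendix result. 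Once this translation between the combinatorics of $X$ and the orbit structure of $(\Omega,\alpha)$ is made explicit, continuity and subadditivity of $F_t$ together with the transferred (B)-condition combine to yield the uniform convergence, and hence the uniformity of $T_E$ claimed.
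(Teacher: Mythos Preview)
Your argument rests on invoking the appendix result as a ``continuous-time analogue of the Damanik--Lenz theorem'' that gives \emph{uniform} convergence of $F_t/t$ under a continuous version of condition~(B). But Proposition~\ref{prop:subadditive_est} is not that theorem: it assumes only unique ergodicity (no form of (B) at all) and concludes only the one-sided bound
\[
\limsup_{t\to\infty}\sup_{\omega\in\Omega}\frac{1}{t}F_t(\omega)\le \gamma(E).
\]
This suffices when $\gamma(E)=0$ (as used in Lemma~\ref{lem:trivial_lyapunov_filtration}), since $F_t\ge 0$ supplies the matching lower bound, but it says nothing about uniformity when $\gamma(E)>0$. The appendix simply does not contain a full subadditive uniform ergodic theorem under (B), and your third paragraph, where you propose to transfer (B) to the suspension flow and then feed it into such a theorem, is therefore appealing to a result the paper never states or proves. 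That transfer-and-apply step is the entire content of the theorem, and it is missing.

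The paper's route avoids this by never passing to a continuous-time (B)-condition. Instead it discretizes: for each $x\in X$ it samples the continuous cocycle at the return times $s_n=l_{x(0)}+\cdots+l_{x(n-1)}$ to obtain a discrete $SL(2,\R)$-cocycle $T_E^{\disc}(n,\omega_x)=T_E(s_n,\omega_x)$ over $(X,S)$ itself, applies the original Damanik--Lenz theorem \cite[Theorem~1]{DamanikLenz2006} directly on the subshift (where (B) is already given), and then shows by elementary two-sided estimates that uniformity of $T_E^{\disc}$ over $(X,S)$ forces uniformity of $T_E$ over $(\Omega,\alpha)$. The key idea you are missing is this reduction to the discrete cocycle; once you have it, no continuous-time strengthening of (B) is needed.
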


\begin{proof}
  Let $E\in\R$, $x\in X$. For $n\in\N_0$ let $s_n:= l_{x(0)} + \ldots + l_{x(n-1)}$. Consider
  \[T^{\disc}_E(n,\omega_x):= T_E(s_n, \omega_x) \quad(n\geq 0).\]
  Then $T^{\disc}_E(m+n,\omega_x) = T^{\disc}_E(m,\omega_{S^n
x})T^{\disc}_E(n,\omega_{x})$.
  Thus $T^{\disc}_E$ is a continuous $SL(2,\R)$-valued cocycle (with discrete time). By \cite[Theorem 1]{DamanikLenz2006}, $T^{\disc}_E$ is uniform.
  Let $t\in\R$, $s>\max\set{-t,0}$ and choose $k\in\N_0$ such that $s_k\leq s < s_{k+1}$. Then
  \begin{align*}
    & \frac{1}{s} \ln\norm{T_E(s,\alpha_t(\omega_x))} - \frac{1}{s_k} \ln \norm{T_E(s_k,\omega_x)} \\
    & \leq \frac{1}{s} \ln\norm{T_E(s+t,\omega_x)} + \frac{1}{s}\ln\norm{T_E(t,\omega_x)^{-1}} - \frac{1}{s_k} \ln \norm{T_E(s_k,\omega_x)} \\
    & \leq \frac{1}{s_k}\ln \norm{T_E(s+t-s_k,\alpha_{s_k}(\omega_x))} + \frac{1}{s}\ln\norm{T_E(t,\omega_x)^{-1}},
  \end{align*}
  and similarly
  \begin{align*}
    & \frac{1}{s_k} \ln \norm{T_E(s_k,\omega_x)} - \frac{1}{s} \ln\norm{T_E(s,\alpha_t(\omega_x))} \\
    & \leq \frac{1}{s} \ln \norm{T_E(s_k\!-\!s\!-\!t,\alpha_{s+t}(\omega_x))} + \frac{s-s_k}{ss_k} \ln \norm{T_E(s_k,\omega_x)} + \frac{1}{s} \ln \norm{T_E(t,\omega_x)}.
  \end{align*}
  As $s\to \infty$, the right-hand sides converge to zero, uniformly in $t$ and
 $x$ (note that it suffices to consider $\abs{t}\leq \max_a l_a$ by
$S$-invariance). Thus, also $T_E$ is uniform.
\end{proof}

\begin{theorem}
  Let $(X,S)$ be an aperiodic minimal subshift over $A$ satisfying (B),
  $\nu_a\in \M$ compactly supported, atomless ($a\in A$) and assume that at
  most one of the measures $\nu_a$ is a multiple of Lebesgue measure. Furthermore, assume that $\Omega$ is aperiodic. Then
  \[\Sigma = \set{E\in\R;\; \gamma(E) = 0}\]
  is a Cantor set of zero Lebesgue measure.
\end{theorem}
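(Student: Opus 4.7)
The plan is to invoke Theorem \ref{thm:cantor1} directly; the work consists entirely in checking its hypotheses using the structural results already established, most importantly Proposition \ref{prop:subshift}, Theorem \ref{thm:Cantor2}, and Boshernitzan's original observation that condition (B) implies unique ergodicity for minimal subshifts.

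First, I would verify that $(\Omega,\alpha,\P)$ is strictly ergodic. Since $(X,S)$ is minimal and satisfies (B), Boshernitzan's theorem (\cite{Boshernitzan1985}) gives that $(X,S)$ is uniquely ergodic; combined with minimality this makes $(X,S)$ strictly ergodic. Proposition \ref{prop:subshift}(c) and (d) then transfer unique ergodicity and minimality to $(\Omega,\alpha)$, and part (b) produces the canonical ergodic invariant measure $\P$, so $(\Omega,\alpha,\P)$ is strictly ergodic.

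Next I would collect the remaining hypotheses of Theorem \ref{thm:cantor1}. Atomlessness of $\Omega$ follows immediately from the construction of $\omega_x$ as a locally finite sum of translates of the atomless measures $\nu_a$. Aperiodicity of $\Omega$ is assumed outright. The s.f.d.p.\ for $(\Omega,\alpha)$ is supplied by Proposition \ref{prop:subshift}(a) under the hypothesis that at most one of the $\nu_a$ is a multiple of Lebesgue measure. Finally, uniformity of $T_E$ for every $E\in\R$ is exactly the content of Theorem \ref{thm:Cantor2}, whose hypotheses (minimal subshift satisfying (B), atomless $\nu_a$) are in force.

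With all assumptions of Theorem \ref{thm:cantor1} verified, I conclude that $\Sigma = \set{E\in\R;\; \gamma(E)=0}$ and that $\Sigma$ is a Cantor set of zero Lebesgue measure. Since every nontrivial ingredient has been proved earlier, there is no substantive obstacle — the proof is a one-paragraph verification exercise. The only point I would flag, as a minor sanity check rather than a real difficulty, is that the set $\Omega$ defined by orbit closure of the $\omega_x$ must genuinely fit the framework of Section \ref{sec:Spec} — i.e.\ be $\norm{\cdot}_\lu$-bounded, vaguely closed and translation invariant — but this is built into the construction preceding Proposition \ref{prop:subshift} together with the compactness of $X$ and the compact support of each $\nu_a$.
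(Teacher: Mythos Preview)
Your proposal is correct and follows exactly the paper's approach: invoke Theorem~\ref{thm:Cantor2} for uniformity of $T_E$, Proposition~\ref{prop:subshift}(a) for the s.f.d.p., and then apply Theorem~\ref{thm:cantor1}. You are simply more explicit than the paper in spelling out the remaining hypotheses of Theorem~\ref{thm:cantor1} (strict ergodicity via Boshernitzan and Proposition~\ref{prop:subshift}(b)--(d), atomlessness from the construction, aperiodicity by assumption), which the paper leaves implicit.
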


\begin{proof}
  By Theorem \ref{thm:Cantor2}, $T_E$ is uniform for all $E\in\R$, and in view of Proposition \ref{prop:subshift}(a) we can apply Theorem \ref{thm:cantor1} to obtain the assertion.
\end{proof}

\begin{remark}
  We remark that aperiodicity of $(X,S)$ may not imply aperiodicity of $(\Omega,\alpha)$. Indeed, just choose $\nu\in\M$ with support in $[0,1]$ and let $\nu_a:=\nu$ and $l_a:=1$ for all $a\in A$.
  Then $(\Omega,\alpha)$ is periodic irregardless of $(X,S)$.
\end{remark}


\appendix

\section{(Semi-)Uniform ergodic theorems}

We state and prove a (semi-)uniform ergodic theorem for continuous time.  This theorem is used in the considerations of Section \ref{sec:Transfer}. We adapt
 the
 arguments of \cite[Theorem 1.5]{SturmanStark2000} from the time-discrete case.

\begin{proposition}
\label{prop:subadditive_est}
Let $(\Omega,\alpha,\P)$ be uniquely ergodic and $(X_t)_{t\geq0}$ be a
continuous subadditive process on $\Omega$, i.e.
\[X_0 = 0,\quad X_{t+s} \leq X_t + X_s\circ \alpha_t \quad(s,t\geq0),\]
and $X_t\in C(\Omega)$ for $t\geq 0$.
Furthermore, assume that
\[M:=\sup_{t\in[0,1]}\sup_{\omega\in\Omega} \abs{X_t(\omega)}<\infty.\]
Then there exists $\overline{X}\in \R$ such that $\frac{1}{t}X_t\to \overline{X}$ $\P$-a.s., and we
have
\[\limsup_{t\to\infty} \sup_{\omega\in\Omega} \frac{1}{t} X_t(\omega) \leq \overline{X}.\]
\end{proposition}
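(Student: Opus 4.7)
My plan is to obtain the $\P$-a.s.\ convergence via Kingman's continuous-time subadditive ergodic theorem and then derive the semi-uniform upper bound by averaging an iterated subadditive decomposition over a continuous offset and invoking the uniform ergodic theorem on the uniquely ergodic system $(\Omega,\alpha,\P)$.

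The first step is a preliminary pointwise bound: iterating $X_{t+s}\leq X_t + X_s\circ \alpha_t$ together with the standing hypothesis $\abs{X_r}\leq M$ for $r\in[0,1]$ gives $X_r(\omega)\leq (r+1)M$ for all $r\geq 0$ and $\omega\in\Omega$. In particular $X_1$ is $\P$-integrable, so the continuous-time Kingman theorem produces a measurable, $\alpha$-invariant function $\overline{X}$ with $\tfrac{1}{t}X_t\to \overline{X}$ $\P$-a.s.; unique ergodicity forces ergodicity, hence $\overline{X}$ is $\P$-a.s.\ constant and equals $\inf_{T>0}\tfrac{1}{T}\int X_T\, d\P$.

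For the semi-uniform estimate, fix $\varepsilon>0$ and choose $T>0$ with $a_T:=\tfrac{1}{T}\int X_T\, d\P<\overline{X}+\varepsilon$. For each $s_0\in[0,T]$, $t>2T$, and $N(s_0):=\lfloor(t-s_0)/T\rfloor$, iterated subadditivity yields
\begin{equation*}
X_t(\omega)\leq X_{s_0}(\omega)+\sum_{k=0}^{N(s_0)-1}X_T(\alpha_{s_0+kT}\omega)+X_{t-s_0-N(s_0)T}(\alpha_{s_0+N(s_0)T}\omega),
\end{equation*}
and both boundary terms are bounded by $(T+1)M$ by the pointwise bound above. Integrating over $s_0\in[0,T]$ and substituting $u=s_0+kT$ (which yields a measure-preserving bijection between $\set{(s_0,k);\; s_0\in[0,T],\,0\leq k\leq N(s_0)-1}$ and $[0,t-T]$) produces
\begin{equation*}
T\, X_t(\omega)\leq 2T(T+1)M+\int_0^{t-T}X_T(\alpha_u\omega)\, du.
\end{equation*}

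Dividing by $tT$ and letting $t\to\infty$, the boundary contribution is $O(1/t)$ uniformly in $\omega$, while the integral equals $\tfrac{t-T}{tT}$ times the Birkhoff average $\tfrac{1}{t-T}\int_0^{t-T}X_T\circ\alpha_u\, du$. Since $X_T\in C(\Omega)$ and $(\Omega,\alpha,\P)$ is uniquely ergodic, the classical uniform ergodic theorem for continuous-time $\R$-actions forces this Birkhoff average to converge to $\int X_T\, d\P=T a_T$ uniformly in $\omega$. Consequently $\limsup_{t\to\infty}\sup_{\omega\in\Omega}\tfrac{1}{t}X_t(\omega)\leq a_T<\overline{X}+\varepsilon$, and since $\varepsilon>0$ was arbitrary the claim follows. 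I expect the main technical hurdle to be the change-of-variables bookkeeping that recasts the averaged discrete subadditive sum as a single continuous-time Birkhoff integral; once this is in place, the uniform ergodic theorem closes the argument and the bounded boundary corrections are harmless.
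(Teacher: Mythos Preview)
Your proof is correct and follows essentially the same route as the paper's: both decompose via subadditivity with a continuous offset (your $s_0\in[0,T]$, the paper's $t\in[0,S]$), integrate over that offset to convert the telescoping sum into a single Birkhoff integral of the continuous function $X_T$, and then apply the uniform ergodic theorem on the uniquely ergodic system. The only cosmetic difference is that the paper first bounds $X_{kS}$ for integer $k$ and afterwards passes to general $t=kS+r$, whereas you decompose $X_t$ directly; your change-of-variables bookkeeping is fine.
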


\begin{proof}
  Let $\varepsilon>0$. For $t>0$ define $\overline{X}_t:= \frac{1}{t}\int X_t\,
 d\P$. By Kingman's subadditive ergodic theorem
  (see, e.g., \cite[Corollary IV.1.3]{CarmonaLacroix1990}) there exists
$\overline{X}\in\R$ such that $\overline{X}_t\to \overline{X}$; there exists $S\in \N$ such that
$\overline{X}_t\leq \overline{X}+\varepsilon$ for
$t\geq S$.
  Let $K:=\sup_{t\in[0,S]}\sup_{\omega\in\Omega} \abs{X_t(\omega)}< \infty$
(which is finite by subadditivity).

  Let $\omega\in\Omega$.
  By subadditivity, for $k\in\N$ and $t\in[0,S]$ we have
  \[X_{kS}(\omega)\leq X_t(\omega) + \sum_{j=0}^{k-2}
X_S(\alpha_{jS+t}(\omega)) + X_{S-t}(\alpha_{(k-1)S+t}(\omega)).\]
  Integrating with respect to $t$ and dividing by $S$ yields
  \[X_{kS}(\omega) \leq 2K + \sum_{j=0}^{k-2} \frac{1}{S}\int_0^S
X_S(\alpha_{jS+t}(\omega))\, dt = 2K + \frac{1}{S} \int_0^{(k-1)S}
X_S(\alpha_t(\omega))\, dt.\]

  Since $X_S$ is continuous there exists $S'>0$ not depending on $\omega$ such
that for all $t\geq S'$ we have
  \[\frac{1}{t}\int_0^t \frac{1}{S} X_S(\alpha_r(\omega))\, dr \leq \int_\Omega
 \frac{1}{S} X_S(\omega)\, d\P(\omega) + \varepsilon.\]
  Choose $k\in\N$ such that $(k-1)S>S'$. Then
  \begin{align*}
    X_{kS}(\omega) & \leq 2K  + (k-1)S \frac{1}{(k-1)S} \int_0^{(k-1)S}
\frac{1}{S}  X_S(\alpha_t(\omega))\, dt \\
    & \leq  2K  + (k-1)S \overline{X}_S + (k-1)S\varepsilon.
  \end{align*}
  Now, for $t\geq S'+2S$ write $t=kS+r$ with $k\in\N$ and $0\leq r<S$. Then
$(k-1)S = t-r-S>S'$ and therefore
  \[X_t(\omega)\leq X_{kS}(\omega) + X_r(\alpha_{kS}(\omega)) \leq 3K  + (k-1)S
 \overline{X}_S + (k-1)S\varepsilon.\]
  Since $t>(k-1)S$ we obtain
  \[\frac{1}{t}X_t(\omega) \leq \overline{X}_S + \varepsilon + \frac{3K}{t}
\leq \overline{X}+2\varepsilon + \frac{3K}{t}.\]
  For $t\geq T:=\max\set{3K\varepsilon^{-1}, S'+2S}$ we finally arrive at
  \[\frac{1}{t}X_t(\omega) \leq \overline{X}+3\varepsilon.\]
  Thus,
  \[\sup_{\omega\in\Omega} \frac{1}{t}X_t(\omega) \leq \overline{X}+3\varepsilon
\quad(t\geq T).\]
  So,
  \[\limsup_{t\to\infty}\sup_{\omega\in\Omega} \frac{1}{t}X_t(\omega) \leq
\overline{X}+3\varepsilon.\]
  For $\varepsilon\to 0$ we obtain the assertion.
\end{proof}

In a similar way we get uniform control of a lower bound in case of additive
processes. Hence, we can obtain uniform convergence in that case.

\begin{proposition}
\label{thm:additive_conv}
Let $(\Omega,\alpha,\P)$ be uniquely ergodic and $(X_t)_{t\geq0}$ be a
continuous additive process on $\Omega$, i.e.,
\[X_0 = 0,\quad X_{t+s} = X_t + X_s\circ \alpha_t \quad(s,t\geq0),\]
and $X_t\in C(\Omega)$ for $t\geq 0$.
Furthermore, assume that
\[M:=\sup_{t\in[0,1]}\sup_{\omega\in\Omega} \abs{X_t(\omega)}<\infty.\]
Then there exists $\overline{X}\in \R$ such that $\frac{1}{t}X_t\to \overline{X}$ $\P$-a.s., and
\[\lim_{t\to\infty} \sup_{\omega\in\Omega}\abs{ \frac{1}{t} X_t(\omega) - \overline{X}} =
0.\]
\end{proposition}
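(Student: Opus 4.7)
The plan is to deduce this from Proposition~\ref{prop:subadditive_est} by applying it to both $(X_t)$ and $(-X_t)$, since an additive process is subadditive and its negative is subadditive as well.

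First I would observe that if $(X_t)$ is additive, then $(X_t)$ is trivially subadditive (equality holds), and likewise $(-X_t)$ is additive, hence subadditive. The bound $M < \infty$ on $[0,1]$ transfers immediately to $(-X_t)$, so both processes satisfy the hypotheses of Proposition~\ref{prop:subadditive_est}. By Kingman's subadditive ergodic theorem applied to each, $\frac{1}{t}X_t\to \overline{X}$ and $\frac{1}{t}(-X_t)\to \overline{Y}$ $\P$-a.s.\ for some constants $\overline{X}, \overline{Y}\in\R$. Clearly $\overline{Y} = -\overline{X}$ (in fact, using $\alpha$-invariance of $\P$, one checks that $t\mapsto \int X_t\, d\P$ is additive and continuous, hence linear, so $\overline{X} = \int X_1\, d\P$).

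Next I would apply Proposition~\ref{prop:subadditive_est} to obtain two one-sided uniform bounds:
\begin{align*}
  \limsup_{t\to\infty}\sup_{\omega\in\Omega} \frac{1}{t}X_t(\omega) &\leq \overline{X}, \\
  \limsup_{t\to\infty}\sup_{\omega\in\Omega} \frac{1}{t}(-X_t(\omega)) &\leq -\overline{X}.
\end{align*}
The second bound is equivalent to $\liminf_{t\to\infty}\inf_{\omega\in\Omega} \frac{1}{t}X_t(\omega) \geq \overline{X}$. Combining the two, for every $\varepsilon>0$ there exists $T>0$ such that $\overline{X} - \varepsilon \leq \frac{1}{t}X_t(\omega) \leq \overline{X} + \varepsilon$ for all $t\geq T$ and all $\omega\in\Omega$. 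This gives
\[
\lim_{t\to\infty} \sup_{\omega\in\Omega}\abs{\frac{1}{t} X_t(\omega) - \overline{X}} = 0,
\]
as required.

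There is no real obstacle here beyond invoking the previous proposition twice; the key observation is simply that additivity is a two-sided subadditivity condition, which upgrades the semi-uniform upper bound of Proposition~\ref{prop:subadditive_est} to a fully uniform convergence statement.
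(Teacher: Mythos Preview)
Your proposal is correct and matches the paper's intended approach: the paper does not spell out a proof but simply remarks that ``in a similar way we get uniform control of a lower bound in case of additive processes,'' which is precisely what you implement by applying Proposition~\ref{prop:subadditive_est} to both $(X_t)$ and $(-X_t)$. Your version is in fact slightly cleaner than redoing the argument, since it invokes the previous result directly rather than reproducing its proof.
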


\bigskip

\noindent
Daniel Lenz \\
Friedrich-Schiller-Universit\"at Jena \\
Fakult\"at f\"ur Mathematik \\
Ernst-Abbe-Platz 2 \\
07743 Jena, Germany \\
{\tt daniel.lenz@uni-jena.de}

\bigskip

\noindent
Christian Seifert \\
Technische Universit\"at Hamburg-Harburg \\
Institut f\"ur Mathematik \\
21073 Hamburg, Germany \\
{\tt christian.seifert@tuhh.de}

\bigskip

\noindent
Peter Stollmann \\
Technische Universit\"at Chemnitz \\
Fakult\"at f\"ur Mathematik \\
09107 Chemnitz, Germany \\
{\tt P.Stollmann@mathematik.tu-chemnitz.de}

\end{document}